\newtheorem{theorem}{Theorem}[section]
\newtheorem{corollary}[theorem]{Corollary}
\newtheorem{proposition}[theorem]{Proposition}
\newtheorem{claim}[theorem]{Claim}
\tikzset{
box/.style ={
rectangle, % the shape 
rounded corners =1pt, % round corners
minimum width =50pt, %minimum width
minimum height =20pt, %minimum height
inner sep=5pt, %distance between the frame and the words
draw=black, % frame color}
align = center
}}
\title{Insightful Mining Equilibria}
\author{
Mengqian Zhang \\\textit{Shanghai Jiao Tong University} \\ \texttt{mengqian@sjtu.edu.cn}
\and Yuhao Li \\\textit{Columbia University}\\ \texttt{yuhaoli@cs.columbia.edu}
\vspace{0.15cm}
\and Jichen Li \\\textit{Peking University}\\ \texttt{limo923@pku.edu.cn}
\and Chaozhe Kong \\\textit{Peking University}\\ \texttt{kcz@pku.edu.cn}
\and Xiaotie Deng \\\textit{Peking University}\\ \texttt{xiaotie@pku.edu.cn}
\vspace{0.3cm}
}
\date{}
\begin{document}

\maketitle
\begin{abstract}
The selfish mining attack, arguably the most famous game-theoretic attack in blockchain, indicates that the Bitcoin protocol is not incentive-compatible. Most subsequent works mainly focus on strengthening the selfish mining strategy, thus enabling a single strategic agent more likely to deviate.
In sharp contrast, little attention has been paid to the resistant behavior against the selfish mining attack, let alone further equilibrium analysis for miners and mining pools in the blockchain as a multi-agent system.

In this paper, first, we propose a strategy called insightful mining to counteract selfish mining. By infiltrating an undercover miner into the selfish pool, the insightful pool could acquire the number of its hidden blocks. We prove that, with this extra insight, the utility of the insightful pool could be strictly greater than the selfish pool’s when they have the same mining power. 
Then we investigate the mining game where all pools can either choose to be honest or take the insightful mining strategy. We characterize the Nash equilibrium of this mining game, and derive three corollaries: (a) each mining game has a \textit{pure} Nash equilibrium;  (b) honest mining is a Nash equilibrium if the largest mining pool has a fraction of mining power no more than 1/3; (c) there are at most two insightful pools under equilibrium no matter how the mining power is distributed.

\end{abstract}

\thispagestyle{empty}
\newpage 
\setcounter{page}{1}

\section{Introduction}\label{sec:intro}
% What a mining pool can do if it is insightful?
% Blockchain, Bitcoin
% SM is not incentive-compatible
% Even if the honest miners have the same mining power, the revenue is quite smaller. with a figure in intro: when alpha(SM)=beta(honestly), the relative revenue of two players
% We propose a strategy, insightful, only more revenue.
% Our analysis is based on sophisticated Markov Reward Process 

% ending a new point: SM multi-agent 的生态下会演化成什么样

% We shed a new light that
% the first step 互相探测没有意义，重新回到honest mining

% 落脚点：抛出新的问题，counter measure
% bargaining set

% PoW重要性

% 一句话和PoS的关系？
% Mining的影响

% 被攻击了（Selfish）

% Insightful

% Our contributions（第一页）

% Simulation，方法？强调完整

% Related Work（可以包括反攻击）

% 用insight 把整篇paper穿起来

% 如果从selfish mining开始，要和intro consistent

Bitcoin~\cite{nakamoto2008bitcoin}, as the pioneering blockchain ecosystem, proposes an electronic payment system without any trusted party. %\textit{Transactions} in the system are recorded in a distributed \textit{ledger}, which is in the form of chain of \textit{blocks}. The blocks are generated by nodes (or say \textit{miners}). 
%To ensure security, Bitcoin 
It creatively uses \textit{Proof-of-Work} (PoW) to incentivise all \textit{miners} to solve a cryptopuzzle (also known as \textit{mining}). The winner will gain the record-keeping rights to generate a \textit{block} and be awarded the newly minted tokens. As more and more computational power is invested into mining, it may take a sole miner months or even years to find a block~\cite{schrijvers2016incentive}. In order to reduce the uncertainty, a group of miners form a \textit{mining pool} to share their computational resources. Under the leadership of the pool manager, all miners in a pool solve the same puzzle in parallel and share the block rewards. In the Bitcoin system, so long as all participants behave honestly, one's expected revenue will be proportional to its hashing power.

However, in practice, miners are rational and may act strategically. Thus, game theory naturally stands out as a tool to analyze the robustness of the Bitcoin protocol.
The conventional wisdom would expect a proof of the incentive compatibility of the Bitcoin protocol and subsequently the strategyproofness against manipulative miners.
%from the perspective of game theory.
% Yuhao!!!
% as a tool

% A major line is selfish mining.

Such a hope was broken by the seminal work~\cite{eyal2014majority}, which proposed the selfish mining strategy, arguably the most well-known game-theoretic attack in blockchain. It indicates that the Bitcoin mining protocol
is not incentive-compatible. 
% \textcolor{red}{Regarding honest miners as the level-zero player, the selfish mining can be seen as a response by the level-one player.} It works because of the different levels of rationality, which gives the selfish pool some competitive advantages. Specifically, the selfish pool is suggested to keep its blocks secret and reveal them strategically when its mining power is above a certain threshold size. 
The key idea behind the attack is to induce honest miners to waste mining power of honest miners.
% on the branch destined to fail. 
As a result, the selfish pool could obtain more revenue than its fair share. 

Pushing this approach to the extreme,
Sapirshtein et al.~\cite{sapirshtein2016optimal} expanded the action space of selfish mining, modeled it as a Markov Decision Process (MDP), and pioneered a novel technique to resolve the non-linear objective function of the MDP to get a more powerful selfish mining strategy, for a revenue arbitrarily close to the optimum.
A series of works have since been initiated to study the mining strategies of a rational pool under the same  assumption that other pools behave honestly~\cite{nayak2016stubborn,feng2019selfish,grunspan2020selfish,ritz2018impact,li2021new,negy2020selfish}. 
% Another milestone 

In sharp contrast, little attention has been paid to the incentive of the victims, which plays an important role in studying the strategic interactions among participants and understanding the stable state of blockchain systems. In this paper, we propose and study the following vital questions.

%\vspace{0.8em}{\noindent

\begin{center}
\textit{%What 
Can a pool strategically defend %facing 
against the selfish mining attack?\\ And what equilibrium will the ecosystem of different types of agents reach eventually?}
\end{center}

%how will the ecosystem \textcolor{red}{evolve} subsequently?}}

%what equilibrium of the ecosystem will reach under the strategic interaction?
%In sharp contrast, there has been much less attention on the vital problem of \textit{what a pool should do when it knows a pool is selfish mining}, in other words, the incentives of a strategic pool when there is one selfish mining pool.

\subsection{Our Contributions}
In this work, we propose a  strategy called \textit{insightful mining} (Fig. \ref{fig:algorithm}) and call the pool adopting the insightful mining strategy as the \textit{insightful pool}. Once detecting a selfish pool, an insightful pool can infiltrate an undercover miner into it to monitor the number of hidden blocks.\footnote{We discuss this action in more detail in Section \ref{sec:strategy}.} With this key information, the insightful pool keeps a clear view of the current mining competition, \textit{i.e.}, the length of the public branch and the selfish branch of the blockchain. Then it utilizes this information to respond strategically. At a high level view, when observing that the selfish pool is taking the lead, the insightful pool would behave honestly to end 
its leading advantage as quickly as possible. On the other hand, 
when the insightful pool is taking the lead, it will execute a strategy similar to selfish mining, regarding the selfish pool and the honest pool as ``others''.

We model the intersections among the honest pool, the selfish pool, and the insightful pool as a two-dimensional Markov reward process with an infinite number of states (Fig.~\ref{fig:transitions} and Table~\ref{tab:transition}). We prove that, when there is a selfish pool and an insightful pool with the same mining power, the insightful pool would get a strictly greater expected revenue than the selfish pool (Theorem~\ref{thm:domination}). This demonstrates that the extra insight significantly reverses the selfish pool's advantage.
% to benefit the countermeasure against the selfish miners.
%Compared to the case where there is only one selfish pool and the others are honest pools, this result shows the extra insight could reverse the selfish pool's advantage over the honest pools significantly to benefit the countermeasure against the selfish miners.

% insight 过渡?

% To more deeply understand the selfish mining strategy with multiple rational pools, equilibrium analysis of such a multi-agent system is critical. However, the complications involved in the scenarios with multiple selfish pools made progress in this direction slow, since the miners may keep an arbitrarily long private chain.
%加引用or加解释、换地方？
% It is well-known that the selfish mining rules out following the Bitcoin protocol truthfully as a Nash equilibrium. We further observe that the scenario with exactly one selfish pool is also not a Nash equilibrium, as the honest pools may have incentives to adopt other strategies (\textit{e.g.}, our insightful mining). Nevertheless, it is no exaggeration to say that characterizing Nash equilibria is still one of the most important questions in the research line of selfish mining.

% Insightful as a single strategy
% To deeply understand the insightful mining strategy,
Then we investigate the scene where all $n$ mining pools are strategic.
Besides counteracting the selfish mining attack, insightful mining can be adopted directly as a mining strategy. Specifically, insightful mining resembles selfish mining if there is no pool mining selfishly. We study the mining game where each pool plants spies into all other pools, and chooses either to follow the Bitcoin protocol or to take the insightful mining strategy. Such a mining game can be formulated as an $n$-player normal-form game. Note that although there are $2^n$ pure strategy profiles, the payoff function of each player is explicitly represented (Proposition \ref{proposition: ER}). Our main result here is a characterization theorem of the Nash equilibrium in mining games (Theorem \ref{thm:equilibria}). 
Concretely, Theorem \ref{thm:equilibria} derives three corollaries: (a) each mining game has a \textit{pure} Nash equilibrium;  (b) honest mining is a Nash equilibrium if the largest mining pool has a fraction of total hashing power no more than 1/3; (c) there are at most two insightful pools under equilibrium no matter how the mining power is distributed.

Beyond our theoretical results, we also conduct a series of simulations to understand insightful mining. First, we explore the performance of the insightful mining strategy when the insightful pool and the selfish pool have different mining power (Section \ref{sec: Insightful Mining vs. Selfish Mining}). Simulation results provide compelling evidence that the insightful pool could gain more revenue even if it holds less mining power than the selfish pool. Second, we 
learn an optimal insightful mining strategy via Markov Decision Process (Section \ref{sec: Optimal Insightful Mining}). It tells us that once the insightful branch is overridden during the mining competition, sticking to it for a while is a better choice than giving up immediately. Finally, we focus on the mining game and visualize the participants' equilibrium strategy, as well as the corresponding utility  (Section \ref{sec: Equilibria Visualization}).

In the end, we discuss the role of the undercover miner in the context of selfish mining and blockchain, which sheds new light upon the future research directions (Section \ref{sec:discussion}).
%In terms of discussion and future work (\Cref{sec:discussion}), beyond our insightful mining strategy, we open up some future directions based on the undercover miner in a more broad blockchain context.

We summarize our main contributions in the following Fig. \ref{tab:Summary of our contributions}.

\begin{figure}[!h]
    \centering
    \includegraphics[width=0.9\textwidth]{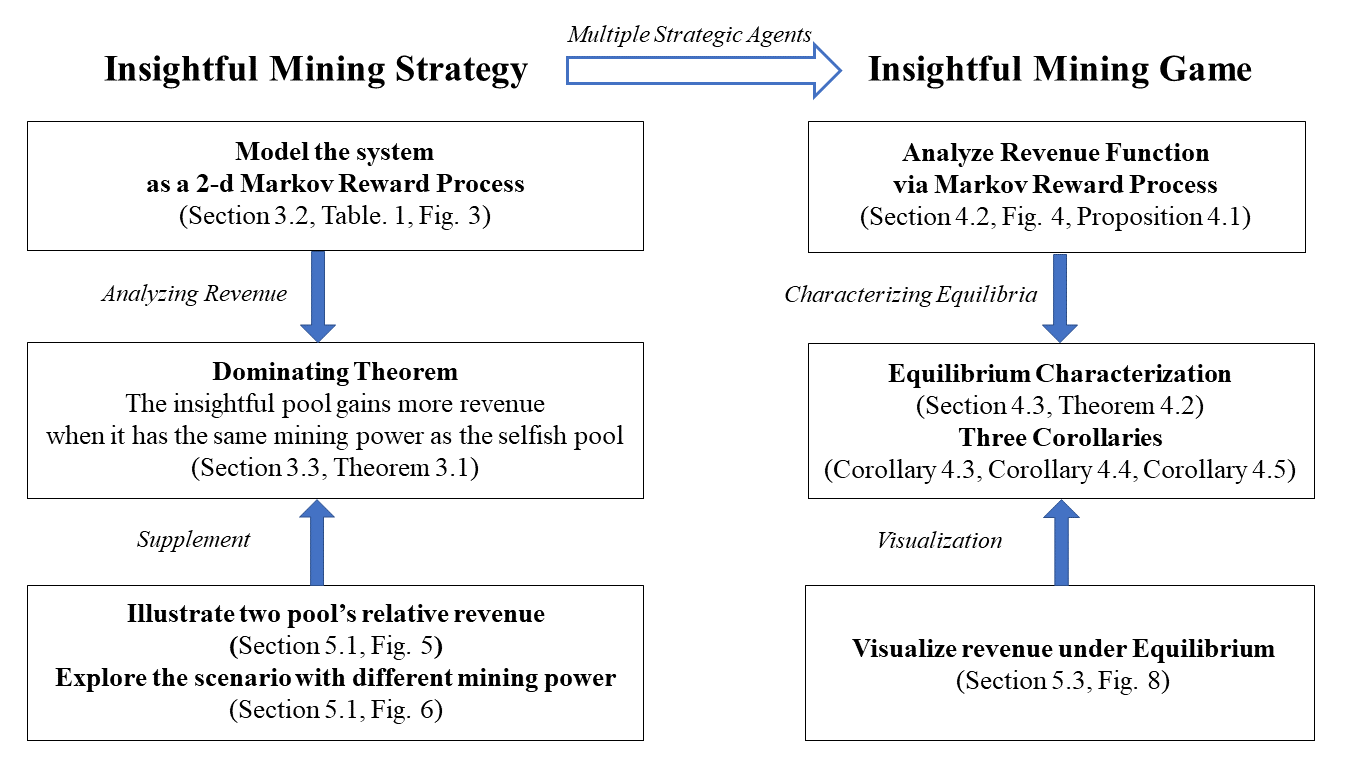}
    \caption{Summary of our contributions.}
    \label{tab:Summary of our contributions}
\end{figure}

\subsection{Related Work}

The classic selfish mining attack was first proposed and mathematically modeled as a Markov reward process in the seminal paper~\cite{eyal2014majority}. Observing that the classic selfish mining strategy could be suboptimal for a large parameter space, several works~\cite{sapirshtein2016optimal,nayak2016stubborn} further generalized the system as a Markov Decision Process (MDP) to find the optimal selfish mining strategy. Aiming to solve the average-MDP with a non-linear objective function, \cite{sapirshtein2016optimal} proposed a binary search procedure by converting the problem into a series of standard MDPs, which is also adopted for learning our \textit{optimal} insightful mining strategy in Section \ref{sec: Optimal Insightful Mining}. A recent work~\cite{zur2020efficient} developed a more efficient method called Probabilistic Termination Optimization, converting the average-MDP into only one standard MDP. %Like all these previous works, in this paper, we also assume that the time to broadcast a block is negligible, and the transaction fee is negligible. In other words, miners' revenue mainly comes from block rewards.

It was more challenging to study other agents' incentives against one selfish miner, due to the tremendous state spaces and complicated Markov reward processes. The work of~\cite{liu2018strategy} presented some simulation results on systems involving multiple selfish miners~\cite{eyal2014majority} or involving multiple stubborn miners~\cite{nayak2016stubborn}. On the learning side, a recent work \cite{hou2019squirrl} proposed a novel framework called SquirRL, which is based on deep reinforcement learning (deep-RL) techniques. Their experiments suggest that when facing selfish mining, adopting selfish mining might not be the optimal choice. We \textit{prove} such a result by providing the insightful mining strategy and the dominating theorem (Theorem~\ref{thm:domination}). The most strengthen of SquirRL is a more general strategy space generated by deep-RL. We highlight that it cannot cover our insightful mining strategy since our most strengthen comes from our undercover miner's insights (information), which was not discussed in the wide selfish mining context.

To our best knowledge, the most related work that theoretically studied the equilibria with multiple selfish mining pools is~\cite{marmolejo2019competing}. Due to the analytical challenges of infinite states in the classic selfish mining strategy, they proposed a simplified version called semi-selfish mining, where the strategic mining pool will only keep a private chain of length at most two. Such a restriction makes the Markov reward process have finite states (as long as there are finite number of semi-selfish miners) and subsequently simplifies the equilibrium analysis. However, our insightful mining strategy works against the classic selfish mining strategy, and itself may also keep an arbitrary long private chain. While this leads to a 2-dimensional Markov reward process with an infinite number of states, the techniques in mathematical analysis are sufficient for us to prove the desired dominating theorem (Theorem~\ref{thm:domination}) and equilibrium characterization (Theorem~\ref{thm:equilibria}).
% and makes it harder to solve the exact formula of relative revenues, we do a more careful  over the comparison between selfish mining and insightful mining, and the equilibrium analysis.

% Some research like~\cite{zhang2017publish,sompolinsky2015secure} proposed some method to defence selfish attack, but they all change the basic mining protocol. Research work~\cite{feng2019selfish} also explains the universality of selfish attack in the blockchain system.

% \subsection{Organization}
% We organize the rest of the paper as follows. In~\Cref{sec:background} we introduce the background of blockchain and the Proof-of-Work framework, and explain the detailed reason of the ability to plant an undercover miner. In~\Cref{sec:insightful_mining}, we formally introduce the insightful mining strategy, model it as a 2-dimensional Markov reward process, and prove the dominating theorem. In~\Cref{sec: Mining Game and Equilibria}, we formulate the mining game and study its equilibria. In~\Cref{sec:simulation}, we further conduct more simulation results to show the optimal insightful mining and visualize pools' revenues. Finally, in~\Cref{sec:discussion}, beyond our insightful mining strategy, we open up some future directions based on the undercover miner in a more broad blockchain context.

\section{Preliminaries}\label{sec:background}

\subsection{Proof of Work}
In the context of blockchain, Proof of Work was first introduced in Bitcoin~\cite{nakamoto2008bitcoin}. As mentioned, the security of Bitcoin heavily relies on the Proof-of-Work scheme, which has also been widely adopted by other blockchain systems like Ethereum~\cite{buterin2014next}. The past decade has seen a great amount of research around PoW, with respect to its block rewards design~\cite{chen2019axiomatic}, strategic deviation~\cite{kiayias2016blockchain}, the difficulty adjustment algorithm~\cite{noda2020economic,goren2019mind}, the energy costs~\cite{fiat2019energy} and so on. 

Taking the Bitcoin as an example, PoW requires a miner to randomly engage in the hashing function calls to solve a cryptopuzzle. Typically, miners should search for a nonce value satisfying that
\begin{equation}
    H(previous \ hash; \ address; \ Merkle \ root; \ nonce) \leq D
    \label{eq:puzzle}
\end{equation}
where $H(\cdot)$ is a commonly known hash function (\textit{e.g.}, SHA-256 in Bitcoin); \textit{precious hash} is the hash value of the previous block; \textit{address} is the miner's address to receive potential rewards; \textit{Merkle root} is an integrated hash value of all transactions in the block; and $D$ is the target of the problem and reflects the difficulty of this puzzle.\footnote{For security, the difficult of puzzles will be adjusted automatically, to make sure the mean interval of
block generation is 10 minutes.} Started from the genesis block, all miners compete to find a feasible solution, thus generating a new block appended to the previous one. In return, they will be awarded the newly minted bitcoins for their efforts in maintaining the blockchain system. The standard Bitcoin protocol treats the longest chain as the main chain. Once encountering two blocks at the same block height, miners randomly choose one to follow according to the uniform tie-breaking rule. Thus, in order to be accepted by more miners, it is suggested to publish the newly generated block immediately. In this paper, the miners who sticks to the Bitcoin protocol are referred to be \textit{honest}.

\subsection{Mining Pool}
\label{sec: reason of spy}
With more and more hashing power invested into the Bitcoin mining, the chances of finding a block as a sole miner are quite slim. Nowadays, miners tend to participate in organizations called mining pool.

In general, a mining pool is composed of a pool manager and several peer miners. All participants shall cooperate to solve the same puzzle. Specifically, each miner will receive a task like (\ref{eq:puzzle}) above from the pool manager, as well as a work unit containing a particular range of nonce. Instead of trying all possible nonce values, the miner only needs to search the answer from the received work unit. In this way, all miners in the pool work in parallel. Once any miner find a valid solution, this pool succeeds in this mining competition. Then a new task will be organised and further released to all miners in the pool. Also, participants will share the mining rewards according to the reward allocation protocol like Pay Per Share (PPS), proportional (PROP), Pay Per Last N Shares (PPLNS)~\cite{zolotavkin2017incentive} and so on. In expectation, the miners' rewards is proportional to their hashing power. As a result, miners who join the mining pool can significantly reduce the variance of mining rewards. Currently, most of the blocks in Bitcoin are generated by mining pools such as AntPool~\cite{Antpool}, Poolin~\cite{Poolin}, F2Pool~\cite{gencer2018decentralization}.

\subsection{Selfish Mining}
It has long been believed that Bitcoin protocol is incentive-compatible. However, Eyal and Sirer~\cite{eyal2014majority} indicates that it is not the case. It describes a well-known attack called selfish mining. A pool could receive higher rewards than its fair share via the selfish mining strategy. This attack exploits ingeniously the conflict-resolution rule of the Bitcoin protocol, in which when encountering a fork, only one chain of blocks will be considered valid. With the selfish mining strategy, the attacker deliberately creates a fork and forces honest miners to waste efforts on a stale branch. Specifically, the selfish pool strategically keep its newly found block in secret, rather than publishing it immediately. Afterwards, it continues to mine on the head of this private branch. When the honest miners generate a new block, the selfish pool will correspondingly publish one private block at the same height and thus create a fork. Once the selfish pool's leads reduces to two, an honest block will prompt the selfish pool to reveal all its private branch. As a well-known conclusion, assuming that the honest miners apply the uniform tie-breaking rule, if the fraction of the selfish pool's mining power is greater than $25\%$, it will always get more benefit than behaving honestly.

\section{Insightful Mining Strategy}\label{sec:insightful_mining}
\subsection{Model and Strategy}\label{sec:strategy}

This paper considers a system of $n$ miners. Each miner $i$ has $m_i$ fraction of total hashing power, such that $\sum_{i=1}^n m_i = 1$. Let $\mathcal{H}$, $\mathcal{S}$, $\mathcal{I}$ denote the set of honest miners, selfish miners, and insightful miners respectively. As the honest miners strictly follow the Bitcoin protocol and do not hide any block information from each other, they are regarded as a whole, which is referred to as the \textit{honest pool} in the paper. Similarly, all selfish miners who adopt the selfish mining strategy combine together to behave as a single agent, which is called the \textit{selfish pool}. The remaining miners form the \textit{insightful pool} and adopt the insightful strategy stated later. Let $\alpha$ and $\beta$ denote the fraction of mining power controlled by the selfish pool and the insightful pool respectively. We have $\alpha=\sum_{i \in \mathcal{S}}m_i$ and $\beta=\sum_{i \in \mathcal{I}}m_i$. Then the total power of the honest pool can be represented as $1-\alpha-\beta$. Following the previous work~\cite{eyal2014majority,sapirshtein2016optimal}, in this paper, we also assume that the time to broadcast a block is negligible, and the transaction fee is negligible. In other words, the pools' revenue mainly comes from block rewards. In addition, the block generation is treated as a randomized model, where a new block is generated in each time slot.

Now we describe the insightful mining strategy. Before getting into the details, we state that the insightful pool could learn how many blocks the selfish pool has been hiding by doing the following. The manager of the insightful pool shall pretend to join the selfish pool as a spy. As a pool member, it will receive a mining task from the manager of the selfish pool. The hash value of the previous block can be parsed from the task. Normally, this hash value corresponds to the last block of the main chain. Once the selfish pool mines a block\footnote{A member of the selfish pool finds an acceptable nonce to the cryptopuzzle and submits it to the manager.}, its manager will keep the block privately and publish a new task based on it. From the perspective of the spy, there is no newly published block in the system, but the selfish manager releases a new task which is based on an unknown block. Then it is reasonable to believe that the selfish manager is hiding blocks. Furthermore, the number of hidden blocks is exactly the number of recently received tasks with unmatched \textit{previous hash}. 

By working as a spy\footnote{We assume that the mining power of this spy is negligible, as well as its revenue from the selfish pool.}, the insightful pool has a clear understanding of the system's situation, \textit{i.e.}, the mining progress of each player. Although all pools are mining after the main chain, the three players may hold different sub-chain (also referred to as \textit{branch}) during the mining competition. Let $l_h$, $l_s$, $l_i$ denote the length of honest branch, selfish branch, and insightful branch respectively. In the process of mining, the honest pool only knows the public information $l_h$. The selfish pool is aware of both $l_h$ and $l_s$, while the insightful pool can observe all three lengths. Then the three types of players compete to generate blocks based on their own information. Their competition works in rounds. Each round begins with a global consensus on the current longest chain. When the selfish pool and insightful pool reveal all their private blocks, or they have no hidden blocks while the honest pool finds a block (see \textit{Case 1} below), the round ends, leading to a new global consensus. For the first block in a round, there are three possible cases.

% \begin{figure}
%     \centering
%     \includegraphics[width=0.96\textwidth]{Figure/case2.png}
%     \caption{An example of \textit{Case 2} in Section~\ref{sec:strategy}. After observing that the selfish pool keeps its block secretly, the insightful pool behaves honestly and mines on the opposite of the selfish branch, especially when the honest branch has the same length (see (c) above).}
%     \label{fig:case2}
% \end{figure}

\textit{Case 1: the honest pool generates the first block.} 
With probability $1-\alpha-\beta$, the honest pool mines a block and broadcasts it immediately. In this case, the insightful pool accepts this newly generated block and mines after it. According to the selfish mining strategy, the selfish pool will do the same. Consequently, all players reach consensus in this case and compete for the next block.

\textit{Case 2: the selfish pool generates the first block.}
With probability $\alpha$, the selfish pool mines a block. Based on the selfish mining strategy, the selfish pool will keep it private, aiming to further extend its lead. After observing this situation through the spy in the selfish pool, the insightful pool behaves honestly until the selfish pool reveals all its hidden blocks. Recall that when facing two branches of the same height, the honest pool uniformly chooses one of the them. The insightful pool, however, will deterministically mine on the opposite of the selfish branch. The key insight behind this strategy is to prompt the selfish pool to reveal all its hidden blocks and end its leading advantage as quickly as possible.

\textit{Case 3: the insightful pool generates the first block.}
With probability $\beta$, the insightful pool mines a block. It hides this block and takes the following actions which are similar to selfish mining. The insightful pool keeps a watchful eye on how many blocks the selfish pool and the honest pool have mined respectively. In the following competition, when its lead is larger than one (\textit{i.e.}, $l_i-\max\{l_h,l_s\}>1$), the insightful pool always hides all its mined blocks. Otherwise, it reveals the private branch all at once. Here, the way of releasing blocks is different from selfish mining which reveals blocks one by one in response to honest behavior. Both methods bring the same revenue. Releasing all blocks at the last moment, however, can make this strategic behavior avoid being detected.

Fig.~\ref{fig:algorithm} visualizes the flow chart of the insightful mining strategy.
%We have also shown the insightful mining strategy in Algorithm~\ref{alg:strategy}.  

\begin{figure}
    \centering
    \includegraphics[width=0.95\textwidth]{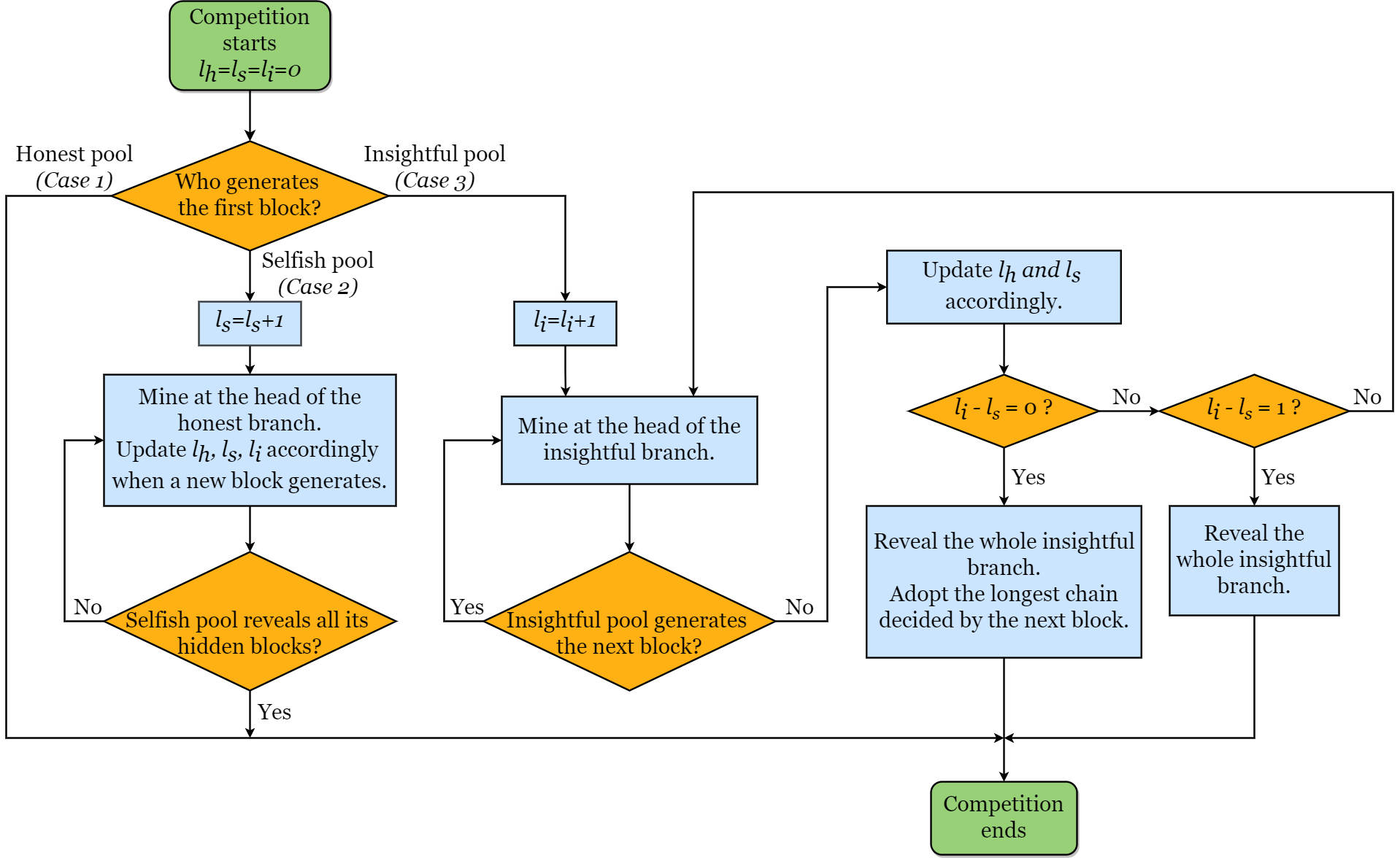}
    \caption{Flow chart of the insightful mining. $l_h$, $l_s$ and $l_i$ are the length of the honest branch, selfish branch, and insightful branch respectively.}
    \label{fig:algorithm}
\end{figure}

\subsection{Markov Reward Process}\label{sec: Markov Reward Process}
To analyse the relative revenue of different players under this strategy, we use a two-dimensional state $s = (x,y)$ to reflect the system status and further model the mining events as a Markov Reward Process. 
%Recall that $l_h$ is the number of blocks on the public chain which is maintained by honest miners. Moreover, $l_s$ and $l_i$ are the length of two branches hidden by the selfish pool and insightful pool. 
The state $x$ denotes the selfish pool's lead over the honest pool, \textit{i.e.}, the number of blocks that the selfish pool has not revealed. Similarly, $y$ is the insightful pool's lead over the selfish pool. Thus, we have $x, y \in \mathbb{N}\cup\{0'\}$ ($0'$ will be explained soon). Here, zero means the selfish pool (corresponding to $x$) or the insightful pool (corresponding to $y$) has no hidden blocks. Specifically, it contains two different states which we use $0$ and $0'$ to distinguish. Take $x$ as an example. The state $x=0$ indicates that the honest pool and the selfish pool are in agreement about a public chain. In other words, their branches are exactly the same. The state $x=0'$ means that the selfish pool and others (the honest pool or the insightful pool) hold a separate branch of the same length, and the selfish pool has revealed all blocks on its branch. In the state of $0'$, the next block will break the tie and decides the longest chain. For $y$, the meanings of state $0$ and $0'$ are similar to the above, with the insightful pool and others (the selfish pool and the honest pool) as two players.

\begin{table}[!t]
\caption{The state transitions and corresponding revenues.}\label{tab:transition}
\resizebox{0.95\textwidth}{!}{ 
\begin{tabular}{c|cclll}
\hline\hline
No. &  State $s$ & State $\tilde{s}$ & $Pr[s,\tilde{s}]$ & $r[s,\tilde{s}]$ & Conditions \\
\hline
1 & (0,0) & (0,0) & $1 - \alpha - \beta$ & $(1,0,0)$ & \\
2 & (0,0) & (1,0) & $\alpha$ & $(0,0,0)$    &   \\
3 & (1,0) & $(0',0)$ & $1 - \alpha - \beta$ & $(0,0,0)$ & \\
4 & $(0',0)$ & (0,0) & 1 & $(\frac{3-3\alpha-\beta}{2},\frac{1+3\alpha-\beta}{2},\beta)$ &    \\
5 & (1,0) & $(1,0')$ & $\beta$ & $(0,0,0)$ &   \\
6 & $(1,0')$ & (0,0) & 1 & $(1-\alpha-\beta,\frac{1+3\alpha-\beta}{2},\frac{1-\alpha+3\beta}{2})$ &  \\
7 & $(x,0)$ & $(x+1,0)$ & $\alpha$ & (0,0,0) & $\forall x \geq 1$    \\
8 & (2,0) & (0,0) & $1-\alpha$  &   (0,2,0) &   \\
9 & $(x,0)$   &   $(x-1,0)$  & $1-\alpha$  &    (0,1,0)  &   $\forall x \geq 3$  \\
10 & (0,0)  &   (0,1)   &   $\beta$ &   (0,0,0) &   \\
11  &   (0,1)   &   $(1,0')$    &    $\alpha$   &    (0,0,0)  &   \\
12  &   (0,1)   &   $(0,0')$    &   $1-\alpha-\beta$    &   (0,0,0) &   \\
13  &   $(0,0')$    &   (0,0)   &   1   &   $(\frac{3-2\alpha-3\beta}{2},\alpha,\frac{1+3\beta}{2})$ &   \\
14  &   (0,1)   &   (0,2)   &   $\beta$ &   (0,0,0) &   \\
15  &   (0,2)   &   (0,0)   &   $1-\beta$   &   (0,0,2)   &   \\
16  &   $(x,y)$ &   $(x,y+1)$   &   $\beta$ &   (0,0,0)   &   $\forall x\in \{0'\}\bigcup\mathbb{N}$, $y\geq 2$    \\
17  &   $(0,y)$ &   $(0,y-1)$   &   $1-\alpha-\beta$    &   (0,0,1) &   $\forall y \geq 3$   \\
18  &   $(x,y)$ &   $(x+1,y-1)$ &   $\alpha$    &    (0,0,1) &   $\forall x \geq 0$, $y \geq 3$  \\
19  &   $(1,y)$ &   $(0',y)$    &   $1-\alpha-\beta$    &   (0,0,0) &   $\forall y \geq 2$   \\
20  &   $(2,y)$ &   $(0,y)$    &   $1-\alpha-\beta$    &   (0,0,0) &   $\forall y \geq 2$   \\
21  &   $(x,y)$ &   $(x-1,y)$   &   $1-\alpha-\beta$    &   (0,0,0) &   $\forall 3 \geq 2$, $y \geq 2$   \\
22  &   $(x,2)$ &   (0,0) &   $\alpha$    &   (0,0,2)   &   $\forall x \geq 1$  \\
23  &   $(0',2)$    &   (0,0)   &   $1-\beta$   & (0,0,2) &   \\
24  &   $(0',y)$    &   $(0,y-1)$   &   $1-\beta$   & (0,0,1) & $\forall y \geq 3$  \\
\hline\hline
\end{tabular}
}
\end{table}

\begin{figure}[!ht]
\includegraphics[width=0.85\textwidth]{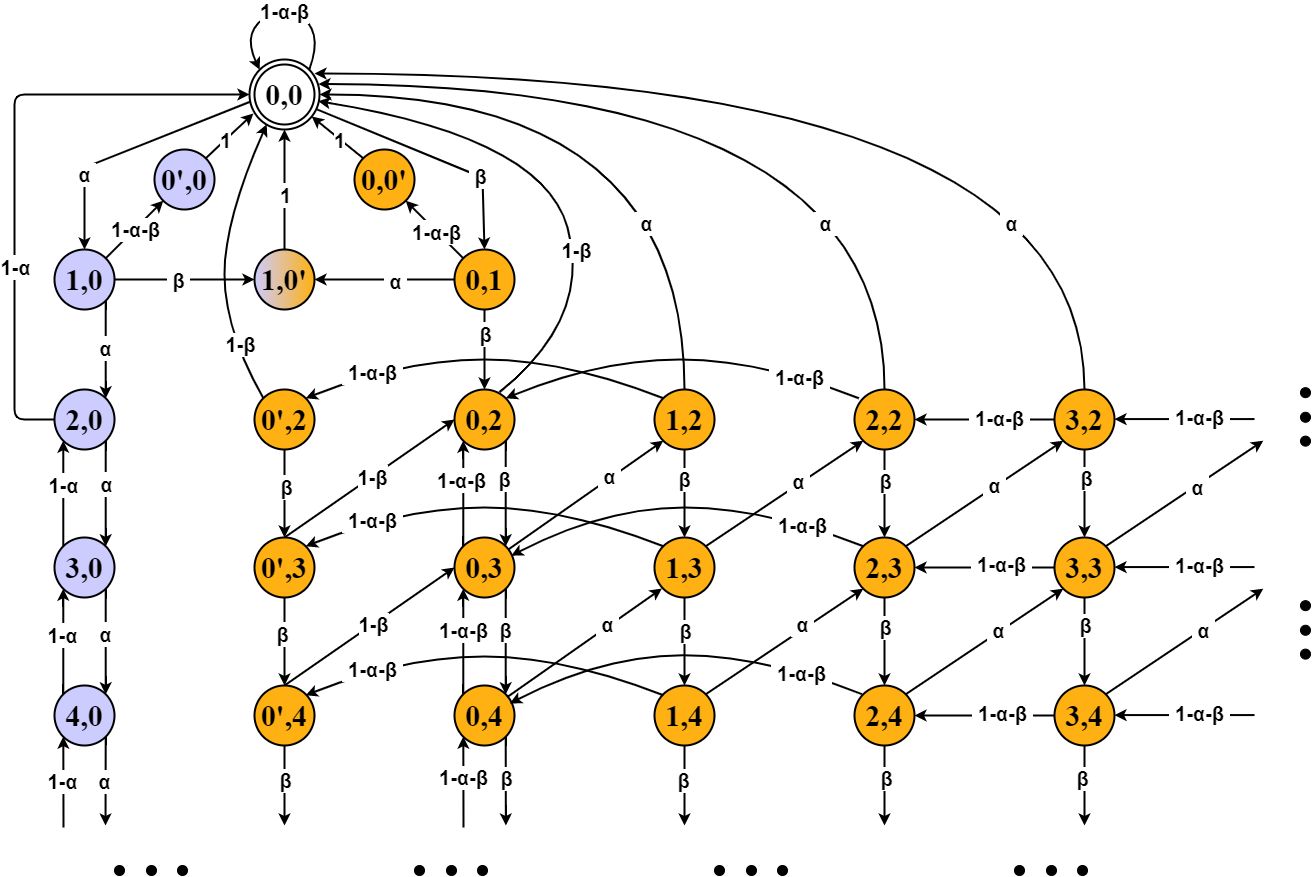}
\caption{The Markov Process of the system under the insightful mining strategy.} 
\label{fig:transitions}
\end{figure}

Let $Pr[s,\tilde{s}]$ denote the probability of changing from state $s$ to state $\tilde{s}$. The vector $r[s,\tilde{s}]$ represents the expected revenue obtained from this state transition. It contains three components, corresponding to the revenue of the honest pool, the selfish pool, and the insightful pool respectively. With the help of these notations, Table~\ref{tab:transition} lists the detailed state transitions and corresponding revenues in the system. Specifically, the item (1) formalizes Case 1 in Section~\ref{sec:strategy}. Items (2)-(9) correspond to Case 2, and Case 3 contains the items (10)-(24). Fig.~\ref{fig:transitions} also illustrates the overall state transitions in a more intuitive way. The detailed analysis of each transition can be found in Appendix~\ref{sec:appendixA}.

Recall that a branch will win at the end of one round. It is easy to verify that in our design each block of this winning branch will be awarded to some player once and only once.

\newcommand\yuhao[1]{\textcolor{cyan}{[Yuhao: #1]}}

%\mathbb
\newcommand{\bbA}{\mathbb{A}}
\newcommand{\bbB}{\mathbb{B}}
\newcommand{\bbC}{\mathbb{C}}
\newcommand{\bbD}{\mathbb{D}}
\newcommand{\bbE}{\mathbb{E}}
\newcommand{\bbF}{\mathbb{F}}
\newcommand{\bbG}{\mathbb{G}}
\newcommand{\bbH}{\mathbb{H}}
\newcommand{\bbI}{\mathbb{I}}
\newcommand{\bbJ}{\mathbb{J}}
\newcommand{\bbK}{\mathbb{K}}
\newcommand{\bbL}{\mathbb{L}}
\newcommand{\bbM}{\mathbb{M}}
\newcommand{\bbN}{\mathbb{N}}
\newcommand{\bbO}{\mathbb{O}}
\newcommand{\bbP}{\mathbb{P}}
\newcommand{\bbQ}{\mathbb{Q}}
\newcommand{\bbR}{\mathbb{R}}
\newcommand{\bbS}{\mathbb{S}}
\newcommand{\bbT}{\mathbb{T}}
\newcommand{\bbU}{\mathbb{U}}
\newcommand{\bbV}{\mathbb{V}}
\newcommand{\bbW}{\mathbb{W}}
\newcommand{\bbX}{\mathbb{X}}
\newcommand{\bbY}{\mathbb{Y}}
\newcommand{\bbZ}{\mathbb{Z}}

\subsection{The Dominating Theorem}\label{sec:Dominating Theorem}

Let $M:=\{H,IM,SM\}$. The utility of each $i\in M$ is the relative revenue defined as follows:
$$RREV_{i}=\bbE\left[\liminf_{T\rightarrow \infty}\frac{\sum_{t=1}^T r_{i}[s_t,s_{t+1}]\Big|s_0=(0,0),s_{t+1}\sim Pr[s_{t},s_{t+1}]}{\sum_{t=1}^T\sum_{j\in M} r_{j}[s_t,s_{t+1}]\Big|s_0=(0,0),s_{t+1}\sim Pr[s_{t},s_{t+1}]}\right].$$
Note that the transition probability $Pr[s_t,s_{t+1}]$, $\forall i\in\{H,SM,IM\}$, $RREV_i$ and $r_i[s_t,s_{t+1}]$ should depend on the mining power $\alpha$ and $\beta$. Here we simplify the notation without ambiguity. We denote the Markov Reward Process of Fig.~\ref{fig:transitions} by $Markov(\alpha,\beta)$.

Like previous work~\cite{hou2019squirrl}, here we focus on the scenario where the selfish pool and the insightful pool have the same mining power. The following theorem asserts that, in this case, the expected revenue of the insightful pool is strictly greater than the the expected revenue of selfish pool. The scenario with different pool sizes (\textit{i.e.}, $\alpha \neq \beta$) will be explored in Section~\ref{sec: Insightful Mining vs. Selfish Mining}.

\begin{theorem}
\label{thm:domination}
Let $\alpha$ and $\beta$ be the fraction of mining power that selfish pool and the insightful pool controls respectively. When $0<\alpha=\beta<\frac{1}{2}$, $RREV_{SM}(\alpha,\beta)<RREV_{IM}(\alpha,\beta)$ holds.
\end{theorem}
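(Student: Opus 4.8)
The plan is to convert the comparison of long-run relative revenues into a comparison of two per-step reward rates, and then to exploit the near-symmetry created by the hypothesis $\alpha=\beta$.

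\emph{Step 1 (reduction to reward rates).} First I would argue that $Markov(\alpha,\beta)$ is positive recurrent for every $0<\alpha=\beta<\tfrac12$: in each lead coordinate the walk has up-probability $\alpha$ and down-probability $1-\alpha$, and $1-\alpha>\alpha$ precisely when $\alpha<\tfrac12$, so both leads have strictly negative drift and the chain returns to $(0,0)$ in finite expected time (this is the delicate regime, since the honest power $1-\alpha-\beta\to 0$ as $\alpha\to\tfrac12$). Ergodicity then upgrades the $\liminf$ in the definition of $RREV_i$ to an almost-sure limit, and the ratio converges to $RREV_i = R_i/\sum_{j\in M}R_j$, where $R_i=\sum_s \pi_s\,\bar r_i(s)$ is the stationary per-step reward of player $i$ and $\bar r_i(s)=\sum_{\tilde s}Pr[s,\tilde s]\,r_i[s,\tilde s]$. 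Because the denominator $\sum_j R_j$ is common to both players, the theorem is equivalent to the single inequality $R_{IM}>R_{SM}$.

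\emph{Step 2 (renewal decomposition and the role of $\alpha=\beta$).} Using $(0,0)$ as the regeneration state, the renewal-reward theorem gives $R_i = G_i/\bbE[\tau]$, where $\tau$ is the return time to $(0,0)$ and $G_i$ is the expected reward collected by $i$ during one excursion. Splitting on the first step out of $(0,0)$ (self-loop with probability $1-\alpha-\beta$, move to $(1,0)$ with probability $\alpha$, move to $(0,1)$ with probability $\beta$) yields $G_{SM}=\alpha A_{SM}+\beta B_{SM}$ and $G_{IM}=\alpha A_{IM}+\beta B_{IM}$, where $A_i$ and $B_i$ are the expected rewards to $i$ on the ``selfish-leading'' excursion from $(1,0)$ and the ``insightful-leading'' excursion from $(0,1)$ respectively (the honest self-loop contributes nothing to $SM$ or $IM$). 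Substituting $\alpha=\beta$, the claim collapses to $A_{IM}+B_{IM}>A_{SM}+B_{SM}$.

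\emph{Step 3 (computing and comparing the excursion rewards).} To evaluate $A_i,B_i$ I would set up the first-passage linear systems for the expected number of visits $N(s)$ to each state during an excursion (equivalently, the unnormalized stationary weights) and solve them using the geometric structure of the lead regions $\{(x,0)\}$, $\{(0,y)\}$ and $\{(x,y)\}$: since each lead coordinate is a biased walk with negative drift, $N(s)$ decays geometrically in the lead length and the resulting series sum in closed form. The guiding intuition is that, absent the spy, the process would be invariant under swapping the two leads together with the two pools, forcing $A_{SM}+B_{SM}=A_{IM}+B_{IM}$ and hence a tie; the strict gain is exactly the informational advantage, which surfaces (i) in the tie-break transitions, where at $\alpha=\beta$ item~13 awards the insightful pool $\tfrac{1+3\alpha}{2}$ while the matched item~4 awards the selfish pool only $\tfrac{1+2\alpha}{2}$, and (ii) in the asymmetric drift of the lead regions, since the insightful pool \emph{deterministically} mines against the selfish branch when behind (actively shrinking the selfish lead) whereas the selfish pool merely hides blocks when behind. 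After the symmetric contributions cancel, I expect the difference $(A_{IM}+B_{IM})-(A_{SM}+B_{SM})$ to reduce to a manifestly positive quantity proportional to $\alpha$.

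\emph{Main obstacle.} The hardest part is the infinite-state bookkeeping in Step~3: one must solve the balance / first-passage equations over infinitely many states, sum the geometric tails correctly, and, crucially, verify that the deep-state contributions do not overturn the clean sign of the low-lying tie-break advantage. I would handle this by matching states in pairs across the two excursion types so that almost all terms cancel exactly, leaving only the positive residual; a secondary technical point is confirming positive recurrence and the finiteness of all the relevant series uniformly on $(0,\tfrac12)$.
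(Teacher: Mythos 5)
Your Steps 1--2 match the paper's setup: the paper likewise reduces the claim to comparing the stationary per-step rewards $ER_{SM}$ and $ER_{IM}$ (the common denominator in the relative revenues cancels), and your renewal decomposition at $\alpha=\beta$ is an equivalent repackaging of that reduction. One remark there: the paper does not take positive recurrence for granted and devotes a separate short argument to the case where the stationary distribution fails to exist; if you want to dispense with that case you must actually prove recurrence, and ``negative drift in each lead coordinate'' has to be turned into a genuine excursion or Lyapunov argument for this two-dimensional chain --- plausible, but not automatic as stated.

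The genuine gap is Step 3, which is the entire content of the theorem and which you leave as a plan. Your proposed route --- solve the first-passage system exactly on both excursion types and cancel matched terms --- runs into the structural asymmetry you only partly acknowledge: the selfish-leading excursion lives on the one-dimensional strip $\{(x,0)\}$ and admits an exact geometric solution, but the insightful-leading excursion enters the genuinely two-dimensional region $\{(x,y):y\geq 2\}$, whose stationary weights $\pi_{(x,y)}$ have no clean product form; ``matching states in pairs'' between a 1-D and a 2-D excursion is not a well-defined cancellation scheme, and there is no symmetry to appeal to, since the spy changes both the tie-break dynamics and the reward bookkeeping (e.g.\ transition 18 pays the insightful pool when the \emph{selfish} pool mines). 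The paper's way around this is the key device your proposal is missing: it computes $ER_{SM}$ exactly (only tractable states are involved) but only \emph{lower-bounds} $ER_{IM}$, using the balance equations for the aggregated quantities $\sum_x \pi_{(x,y)}$ to obtain $\sum_x\pi_{(x,j)} > \bigl(\tfrac{\beta}{1-\beta}\bigr)^{j-1}\pi_{(0,1)}$, which yields a closed-form lower bound $h_2(\alpha,\beta)\,\pi_{(0,0)}$ satisfying $h_2>h_1$ at $\alpha=\beta$. Without this lower-bounding idea (or some substitute for the intractable exact 2-D computation), your Step 3 does not close, and the decisive inequality $A_{IM}+B_{IM}>A_{SM}+B_{SM}$ remains an expectation rather than a proof.
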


\begin{proof}
First suppose that the stationary distribution of $Markov(\alpha,\beta)$ exists, denoted by $\pi$, where $\pi_{(i,j)}$ represents the stationary probability of state $(i,j)$. Let $\bbS$ denote the (countable) set of states in $Markov(\alpha,\beta)$.

Let $ER_{H}(\alpha,\beta),ER_{SM}(\alpha,\beta),ER_{IM}(\alpha,\beta)$ denote the expected reward of honest pool, selfish pool, and insightful pool respectively, which can be calculated as follows. 

For each $i\in\{H,SM,IM\}$, we have
$$ER_{i}(\alpha,\beta) = \sum_{s\in\bbS}\sum_{\tilde{s}\in\bbS} Pr[s,\tilde{s}]\cdot r_{i}[s,\tilde{s}]\cdot\pi_s,$$
where $r_{H}[s,\tilde{s}]$, $r_{SM}[s,\tilde{s}]$ and $r_{IM}[s,\tilde{s}]$ denote $r[s,\tilde{s}](0)$, $r[s,\tilde{s}](1)$, and $r[s,\tilde{s}](2)$ respectively.

Then we have $$RREV_{SM}(\alpha,\beta)=\frac{ER_{SM}(\alpha,\beta)}{ER_{H}(\alpha,\beta)+ER_{SM}(\alpha,\beta)+ER_{IM}(\alpha,\beta)}$$ and $$RREV_{IM}(\alpha,\beta)=\frac{ER_{IM}(\alpha,\beta)}{ER_{H}(\alpha,\beta)+ER_{SM}(\alpha,\beta)+ER_{IM}(\alpha,\beta)}.$$ 

We will give a lower bound of $ER_{IM}(\alpha,\beta)$ that can be proved to be strictly greater than $ER_{SM}(\alpha,\beta)$ to show that $RREV_{IM}(\alpha,\beta)>RREV_{SM}(\alpha,\beta)$ always holds.

We first represent $ER_{SM}(\alpha,\beta)$ as $h_1(\alpha,\beta)\cdot \pi_{(0,0)}$. From the stationary distribution of $Markov(\alpha,\beta)$, we obtain the following equations:
\begin{eqnarray*}
\left\{ \begin{array}{ll}
\pi_{(1,0)} = \alpha \pi_{(0,0)}; \\
\pi_{(0,1)}=\beta \pi_{(0,0)};\\
\pi_{(1,0')}=\beta \pi_{(1,0)}+\alpha\pi_{(0,1)};\\
\pi_{(0',0)}=(1-\alpha-\beta)\pi_{(1,0)};\\
\pi_{(0,0')}=(1-\alpha-\beta)\pi_{(0,1)};\\
\pi_{(i+1,0)} = \frac{\alpha}{(1-\alpha)} \pi_{(i,0)}, ~\forall i\geq 1;
\end{array}
\right.
\Longrightarrow~~
\left\{ \begin{array}{l}
\pi_{(1,0)} = \alpha \pi_{(0,0)}; \\
\pi_{(0,1)}=\beta \pi_{(0,0)};\\
\pi_{(1,0')}=2\alpha\beta\pi_{(0,0)};\\
\pi_{(0',0)}=\alpha(1-\alpha-\beta)\pi_{(0,0)};\\
\pi_{(0,0')}=\beta(1-\alpha-\beta)\pi_{(0,0)};\\
\pi_{(2,0)} = \frac{\alpha^2}{1-\alpha}\pi_{(0,0)};\\
\sum_{i=3}^{\infty} \pi_{(i,0)} = \frac{\alpha^3}{(1-\alpha)(1-2\alpha)} \cdot \pi_{(0,0)}.
\end{array}
\right.
\end{eqnarray*}

Then we have 
\begin{eqnarray*}
	&&ER_{SM}(\alpha,\beta) = \sum_{s\in\bbS}\sum_{\tilde{s}\in\bbS} Pr[s,\tilde{s}]\cdot r_{SM}[s,\tilde{s}]\cdot\pi_s\\
	&=& (\pi_{(0',0)}+\pi_{(1,0')})\cdot 1 \cdot \frac{1+3\alpha-\beta}{2} + \pi_{(0,0')}\cdot 1\cdot \alpha + \pi_{(2,0)}\cdot (1-\alpha)\cdot 2 + \sum_{i=3}^{\infty}\pi_{(i,0)}\cdot(1-\alpha)\cdot 1 \\
	&=&
	h_1(\alpha,\beta)\cdot \pi_{(0,0)},
\end{eqnarray*}
where $h_1(\alpha,\beta)\coloneqq \alpha(1-\alpha+\beta)\cdot \frac{1+3\alpha-\beta}{2}+\alpha^2(1-\alpha-\beta)+2\alpha^2+\frac{\alpha^3}{1-2\alpha}$.

We then give a lower bound of $ER_{IM}(\alpha,\beta)$, denoted by $ER_{IM}^*(\alpha,\beta)$, which could be written as $h_2(\alpha,\beta)\cdot \pi_{(0,0)}$.

From the stationary distribution of $Markov(\alpha,\beta)$, we obtain the following equations:
\begin{eqnarray*}
\left\{ \begin{array}{ll}
% \pi_{(1,0)} = \alpha \pi_{(0,0)}; & \\
% \pi_{(0,1)} = \beta \pi_{(0,0)}; &\\
\beta \pi_{(0,1)} = (1-\beta) (\pi_{(0',2)} + \pi_{(0,2)}) + \alpha \sum_{i \geq 1} \pi_{(i,2)}; &\\
\beta \sum_i \pi_{(i,j)} = (1-\beta) (\pi_{(0',j+1)} + \pi_{(0,j+1)}) + \alpha \sum_{i \geq 1} \pi_{(i,j+1)}, ~\forall j\geq 2.
\end{array}
\right.
\end{eqnarray*}

Note that $\alpha < \frac{1}{2} < 1-\beta$. We have

\begin{equation*}
    \begin{array}{rl}
    \beta \pi_{(0,1)} &= (1-\beta) (\pi_{(0',2)} + \pi_{(0,2)}) + \alpha \sum_{i \geq 1} \pi_{(i,2)} < (1-\beta) \sum_i \pi_{(i,2)} \\
    \beta \sum_i \pi_{(i,2)} &= (1-\beta) \pi_{(0',3)} + (1-\beta) \pi_{(0,3)} + \alpha \sum_{i \geq 1} \pi_{(i,3)} < (1-\beta) \sum_i \pi_{(i,3)}  \\
    \cdots&=\quad \cdots  \\
    \beta \sum_i \pi_{(i,j)} &= (1-\beta) \pi_{(0',j+1)} + (1-\beta) \pi_{(0,j+1)} + \alpha \sum_{i \geq 1} \pi_{(i,j+1)} < (1-\beta) \sum_i \pi_{(i,j+1)}\\
    \cdots&=\quad \cdots  \\
    \end{array} 
\end{equation*}

In other words,
\begin{equation}
    \begin{array}{l}
    \sum_i \pi_{(i,2)} > \frac{\beta}{1-\beta} \pi_{(0,1)} \\
    \sum_i \pi_{(i,3)} > \frac{\beta}{1-\beta} \sum_i \pi_{(i,2)} > (\frac{\beta}{1-\beta})^2  \pi_{(0,1)}   \\
    \cdots \\
    \sum_i \pi_{(i,j)} > (\frac{\beta}{1-\beta})^{j-1}  \pi_{(0,1)}   \\
    \cdots
    \end{array}
\end{equation}
Now
\begin{eqnarray*}
	&&ER_{IM}(\alpha,\beta) = \sum_{s\in\bbS}\sum_{\tilde{s}\in\bbS} Pr[s,\tilde{s}]\cdot r_{IM}[s,\tilde{s}]\cdot\pi_s\\
	&=& \pi_{(0',0)}\cdot \beta+\pi_{(1,0')}\cdot \frac{1-\alpha+3\beta}{2}+\pi_{(0,0')}\cdot \frac{1+3\beta}{2} + \left[(\pi_{(0',2)}+\pi_{(0,2)})\cdot (1-\beta)+\sum_{i\geq 1}\pi_{(i,2)}\cdot \alpha\right]\cdot 2\\
	&&  + \sum_{j\geq 3} \left[(\pi_{(0',j)}+\pi_{(0,j)})\cdot (1-\beta)+\sum_{i\geq 1}\pi_{(i,j)}\cdot \alpha\right]\\
    &=&\pi_{(0',0)}\cdot \beta+\pi_{(1,0')}\cdot \frac{1-\alpha+3\beta}{2}+\pi_{(0,0')}\cdot \frac{1+3\beta}{2} + \pi_{(0,1)}\cdot \beta\cdot 2\\
    &&  + \sum_{j\geq 3} \left[(\pi_{(0',j)}+\pi_{(0,j)})\cdot (1-\beta)+\sum_{i\geq 1}\pi_{(i,j)}\cdot \alpha\right]\\
    &>&\pi_{(0',0)}\cdot \beta+\pi_{(1,0')}\cdot \frac{1-\alpha+3\beta}{2}+\pi_{(0,0')}\cdot \frac{1+3\beta}{2} + \pi_{(0,1)}\cdot 2\beta+\pi_{(0,1)}\cdot \frac{\beta^2}{1-2\beta}.
\end{eqnarray*}

%So $ER_{IM}^*(\alpha,\beta) = h_2(\alpha,\beta)\pi_{(0,0)}$, where $$h_2(\alpha,\beta)=\frac{\alpha\beta(3+\beta-3\alpha)}{2}+\frac{\alpha\beta(1-\alpha+3\beta)}{2}+\frac{\beta(1-\alpha-\beta)(1+3\beta)}{2}+\frac{(2-3\beta)\beta^2}{1-2\beta}.$$

We define $ER_{IM}^*(\alpha,\beta)$ to be the last formula, \textit{i.e.}, 

\begin{equation*}
	ER_{IM}^*(\alpha,\beta) \coloneqq\pi_{(0',0)}\cdot \beta+\pi_{(1,0')}\cdot \frac{1-\alpha+3\beta}{2}+\pi_{(0,0')}\cdot \frac{1+3\beta}{2} + \pi_{(0,1)}\cdot 2\beta+\pi_{(0,1)}\cdot \frac{\beta^2}{1-2\beta}.
\end{equation*}
So $ER_{IM}^*(\alpha,\beta) = h_2(\alpha,\beta)\pi_{(0,0)}$, where $$h_2(\alpha,\beta)=\alpha\beta(1-\alpha-\beta)+2\alpha\beta\cdot\frac{1-\alpha+3\beta}{2}+\beta(1-\alpha-\beta)\cdot\frac{1+3\beta}{2}+2\beta^2+\frac{\beta^3}{1-2\beta}.$$

It is easy to verify that given $\alpha=\beta$, $h_2(\alpha,\beta)>h_1(\alpha,\beta)$ holds, so we have
\[ER_{IM}(\alpha,\beta)>ER_{IM}^*(\alpha,\beta)>ER_{SM}(\alpha,\beta),\] \textit{i.e.}, \[RREV_{IM}(\alpha,\beta)>RREV_{SM}(\alpha,\beta).\]

This completes the first part of our proof.

Now suppose that the stationary distribution of $Markov(\alpha,\beta)$ does not exist. Notice that our $Markov(\alpha,\beta)$ is irreducible, so every state in $Markov(\alpha,\beta)$ is not positive recurrent, \textit{i.e.}, the expected recurrence time of every state is infinite. If the first step of $Markov(\alpha,\beta)$ is transferring from $(0,0)$ to $(1,0)$, then in expectation, after finite steps the Markov Reward Process will return to $(0,0)$ because $\alpha<1/2<1-\alpha$. On the other hand, if the first step of $Markov(\alpha,\beta)$ is transferring from $(0,0)$ to $(0,1)$, then in expectation, the Markov Reward Process will have infinite steps in the states $\bigcup_{x}\bigcup_{y\geq 2}(x,y)$ because $(0,0)$ is not positive recurrent. During the infinite steps the honest pool and the selfish pool have no reward but there are about $\alpha*T$ rewards for the insightful pool, where $T$ is the steps. As $T\rightarrow \infty$, $RREV_{IM}(\alpha,\beta)\rightarrow 1$. This completes the second part of our proof.

\end{proof}

\section{The Mining Game and Equilibria}\label{sec: Mining Game and Equilibria}
In this section, we consider the scenario where all $n$ mining pools are strategic and study its Nash equilibrium. It is worth noting that insightful mining is a well-defined strategy and can be adopted directly. If there is no selfish pool in the system, insightful mining will look the same as the selfish mining. Then we consider the scenario where each pool can choose to follow the Bitcoin protocol truthfully or take the insightful mining strategy. We will formally define its strategy space in Section~\ref{sec: Strategy Space}, analyze the utility functions in Section~\ref{sec: Expected Reward Functions}, and characterize the Nash equilibrium in Section~\ref{sec:Characterization}.

\subsection{Strategy Space}\label{sec: Strategy Space}

There are $n$ mining pools and we denote it by $[n]\coloneqq\{1,\cdots,n\}$. The fraction of their hashing power is denoted by $\{m_1, \cdots, m_n\}$ and we have $\sum_{i=1}^n m_i = 1$. Each pool $i$ will infiltrate undercover miners into all other pools 
% (\textit{i.e.}, $\forall j \neq i$) 
to monitor their real-time state, namely, whether a certain pool is mining selfishly and if any, how many blocks are hidden. As a result, each pool $i$ could adopt the insightful mining strategy.

In the mining game, each pool has two strategies: \textit{refined honest mining} and \textit{insightful mining}, denoted by \textit{RHonest} and \textit{Insightful} respectively. The insightful mining strategy is exactly the same as we proposed before, while the refined honest mining is a slightly modified version of the standard mining strategy. Specifically, the refined honest mining requires the pool to mine after the longest public chain, and to publish its newly-generated block immediately. If someone hides the block, each pool could detect it through the spy therein. Then when facing two branches of the same length, the pool who adopts the refined honest mining strategy shall clearly follow the honest branch, instead of uniformly choosing one of them. 

%Each pool will plant an undercover miner into all other pools to detect if they are hiding blocks.  

%\textit{Case 1: a mining pool that adopts the honest mining strategy generates the first block.} This pool will broadcast it immediately and all other pools will accept this newly generated block, reach consensus and compete for the next block.

%\textit{Case 2: a mining pool that adopts the insightful mining strategy generates the first block.}

\subsection{Expected Reward Functions}\label{sec: Expected Reward Functions}
In this section, we give the formula of the expected reward function $ER_i(x_1,\cdots,x_n)$ of each pool $i$ under the pure strategy profile $(x_1,\cdots,x_n)\in\{RHonest, Insightful\}^n$.

\begin{proposition}\label{proposition: ER}
For an $n$-player mining game $(m_1,\cdots,m_n)$, let $(x_1,\cdots,x_n)$ be a (pure) strategy profile. Let $c$ be a value depending on $(m_1,\cdots,m_n)$ and $(x_1,\cdots,x_n)$.\footnote{We note that $c$ will not affect the  calculation of a pool's relative revenue in the subsequent section.} Let $Q\subseteq [n]$ be the set of pools that adopt $Insightful$ strategy. Then we have
\begin{eqnarray}
ER_i(x_1,\cdots,x_n)=\left\{ \begin{array}{ll}
c\cdot(f(m_i)+m_i\cdot\sum_{j\in Q}2m_j(1-m_j)), & i\in Q;\\
c\cdot(m_i+m_i\cdot\sum_{j\in Q}2m_j(1-m_j)), & i\not\in Q,\\
\end{array}
\right.
\end{eqnarray}
where $f(y)\coloneqq y^2\cdot (2-3y)/(1-2y)$.
\end{proposition}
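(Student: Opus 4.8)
The plan is to compute each $ER_i(x_1,\dots,x_n)$ by reducing the $n$-pool mining game to an aggregated three-player instance of the Markov Reward Process $Markov(\alpha,\beta)$ already analyzed in Section~\ref{sec:Dominating Theorem}. The key structural observation is that the expected reward decomposes into two parts: a ``baseline'' term reflecting each pool's fair share of the mining power, and a ``bonus'' term that each insightful pool extracts by strategically exploiting forks. Concretely, I would argue that from the viewpoint of a single pool $i$ deciding its reward, the rest of the system can be collapsed: every other pool's behavior only enters through aggregate quantities, so the reward bookkeeping of Table~\ref{tab:transition} applies with suitably aggregated parameters.

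First I would fix a target pool $i$ and classify the remaining pools. When $i \in Q$ plays \emph{Insightful}, I would treat $i$ as the ``insightful pool'' of Section~\ref{sec:strategy} with power $\beta = m_i$, and collapse all pools in $Q\setminus\{i\}$ together with the honest pools into the roles of the other two players. The crucial point, which I expect to verify from the transition table, is that an insightful pool's reward from its own private-branch attacks corresponds to the $f(m_i) = m_i^2(2-3m_i)/(1-2m_i)$ term, matching the structure of $h_2$ (indeed $f$ is exactly the closed form obtained by summing the geometric-type series $\sum_{i\ge3}\pi_{(i,0)}$-style contributions specialized to a lone attacker against an honest remainder). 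By contrast, a pool $i\notin Q$ (playing \emph{RHonest}) earns only its proportional baseline $m_i$ from its own blocks, since it hides nothing. I would then account for the additional revenue every pool receives when \emph{some other} insightful pool $j\in Q$ mounts an attack: each such $j$ generates forks that, under the uniform/deterministic tie-breaking conventions, redistribute blocks, and the net expected transfer to pool $i$ is proportional to $m_i$ times $\sum_{j\in Q} 2m_j(1-m_j)$. This explains why the bonus term $m_i\cdot\sum_{j\in Q}2m_j(1-m_j)$ is identical in both cases of the piecewise formula.

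The main obstacle will be justifying the aggregation rigorously, that is, showing that when multiple insightful pools coexist, their attacks do not interfere in a way that breaks the additive decomposition. I would handle this by appealing to the round structure described in Section~\ref{sec:strategy}: each round begins at a fresh global consensus, at most one pool is ``in the lead'' with a private branch at any time under the equilibrium strategies, and the probability that two distinct insightful pools simultaneously hold hidden blocks contributes only to lower-order terms that are absorbed into the common normalizing constant $c$. Thus each insightful pool's self-attack contribution and each pool's passive share can be tallied independently, yielding the clean sum over $Q$. I would make the factor $2m_j(1-m_j)$ precise by revisiting the tie-breaking rewards in rows such as (4), (6), and (13) of Table~\ref{tab:transition}: the quantity $2m_j(1-m_j)$ is exactly the expected number of ``extra'' blocks generated per round attributable to the fork created by attacker $j$ against the remaining $1-m_j$ fraction.

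Finally, I would assemble the pieces: sum the baseline term ($f(m_i)$ for insightful $i$, $m_i$ for honest $i$), add the common bonus term, factor out the shared constant $c$ arising from the stationary probability $\pi_{(0,0)}$ and round normalization, and match the result to the stated piecewise expression. Because $c$ is common across all players and all that matters downstream is the relative revenue $ER_i/\sum_k ER_k$, I would not need to pin down $c$ explicitly, which considerably simplifies the argument and lets me defer the most delicate normalization constants to the footnote's promise that $c$ is irrelevant for relative-revenue comparisons.
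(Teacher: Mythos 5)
Your high-level decomposition is the right one --- a baseline of $f(m_i)$ for an insightful pool or $m_i$ for an honest pool, plus a common fork bonus $m_i\cdot\sum_{j\in Q}2m_j(1-m_j)$, with $f$ emerging as the closed form of a geometric series --- and this matches the paper's bookkeeping. However, two steps of your argument would not survive being made precise. First, the reduction to $Markov(\alpha,\beta)$ and the reward rows of Table~\ref{tab:transition} is the wrong chain: in the mining game there is no selfish pool, and the non-deviating pools play \emph{RHonest}, which resolves ties deterministically in favor of the honest branch rather than uniformly, so the $(1+3\alpha-\beta)/2$-type tie splits of rows (4), (6), (13) never arise. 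The paper instead writes down a fresh and much simpler chain (Fig.~\ref{fig:game}): a star of one-dimensional chains $0\to 1(q_k)\to 2(q_k)\to\cdots$, one per insightful pool, together with tie states $0'(q_k)$, and solves its stationary equations directly to get $\pi_{0'(q_k)}=m_{q_k}(1-m_{q_k})\pi_0$ and $\pi_{i(q_k)}=m_{q_k}\left(\tfrac{m_{q_k}}{1-m_{q_k}}\right)^{i-1}\pi_0$; your constant $c$ is exactly $\pi_0$.

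Second, and more seriously, your handling of interference between insightful pools is a genuine gap. You assert that simultaneous hiding by two insightful pools ``contributes only to lower-order terms that are absorbed into the common normalizing constant $c$.'' A single multiplicative constant shared by all players cannot absorb additive, player-dependent corrections; if simultaneous hiding had positive probability, the clean formula in the proposition would be false, not approximately true. The reason the per-attacker decomposition is \emph{exact} is structural: by the definition of the insightful strategy (Case 2 in Section~\ref{sec:strategy}, applied via the spies), the moment one pool is detected hiding a block, every other insightful pool reverts to honest behavior until that leader's branch is resolved. Hence at most one private branch exists at any time, every round is attributable to exactly one attacker (or to none), and the contributions indexed by $j\in Q$ add with no cross terms. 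With that observation replacing the ``lower-order terms'' claim, and with the chain of Fig.~\ref{fig:game} replacing Table~\ref{tab:transition}, your computation of $f(m_i)$ and of the $2m_j(1-m_j)$ tie-resolution bonus goes through as you describe.
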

\begin{proof}
Suppose that $Q=\{q_1,\cdots,q_k\}$ is the set of insightful pools and their hashing power is $\{m_{q_1},\cdots,m_{q_k}\}$. Other pools (\textit{i.e.}, $[n]\setminus Q$) adopt the refined honest mining strategy. Figure.~\ref{fig:game} illustrates the state transitions in the game. The top state $0$ is the initial state, meaning that there is a general consensus among all pools about the longest chain. For others, a state is composed of two parts, a symbol $y\in\mathbb{N^+}\cup \{0'\}$ and a pool label $(q_k)$, meaning that the insightful pool $q_k$ hides $y$ blocks while others reveal all their blocks. In particular, $0'(q_k)$ represents that there are two branches of the same length 1, in which one is held by $q_k$ and the other is held by a pool other than $q_k$.

We first analyze the expected reward of an insightful pool. 
For ease of notation, we focus on the pool $q_k \in Q$ (the right part of Fig.~\ref{fig:game}). The cases for the other insightful pools are
analogous and can be handled via an appropriate relabeling.
With probability $m_{q_k}$, the insightful pool $q_k$ finds a block and keeps it secretly. Then all other pools $[n]\setminus\{q_k\}$ will detect this behavior and mine on the other side. Denote $\pi_s$ as the stationary probability of the state $s$. We obtain the following equations
\begin{eqnarray*}
\left\{ \begin{array}{l}
m_{q_k} \pi_0 = \pi_{0'(q_k)} + (1-m_{q_k}) \pi_{2(q_k)};  \\
(1-m_{q_k}) \pi_{1(q_k)} = \pi_{0'(q_k)};   \\
m_{q_k} \pi_{i(q_k)} = (1-m_{q_k}) \pi_{i+1(q_k)},~\forall i \geq 1;
\end{array}
\right.
\Longrightarrow~
\left\{ \begin{array}{l}
\pi_{0'(q_k)} = m_{q_k} (1-m_{q_k}) \pi_0;   \\
\pi_{i(q_k)} = m_{q_k} \left(\frac{m_{q_k}}{1-m_{q_k}}\right)^{i-1} \pi_0,~\forall i \geq 1.
\end{array}
\right.
\end{eqnarray*}
Further, we calculate $q_k$'s expected revenue from this branch by
\begin{equation*}
    \pi_{0'(q_k)} \cdot m_{q_k} \cdot 2 + \pi_{2(q_k)} \cdot (1-m_{q_k}) \cdot 2 + \sum_{i=3}^\infty \pi_{i(q_k)} \cdot (1-m_{q_k}) \cdot 1 = \pi_0 \cdot \left[2m_{q_k}^2(1-m_{q_k}) + \frac{m_{q_k}^2(2-3m_{q_k})}{1-2m_{q_k}}\right].
\end{equation*}
In addition, the expected revenue of others as a whole from this branch is 
\begin{equation*}
    \pi_{0'(q_k)} \cdot (1-m_{q_k}) \cdot 2 = \pi_0 \cdot 2m_{q_k}(1-m_{q_k})^2,
\end{equation*}
since each pool $i \neq q_k$ could obtain $\pi_0 \cdot 2m_{q_k}(1-m_{q_k})^2 \cdot \frac{m_i}{1-m_{q_k}} = \pi_0 \cdot 2m_i m_{q_k} (1-m_{q_k})$ in expectation. Similarly, $q_k$ can also get revenue from other insightful $q_j$ ($q_j \in Q$ and $q_j \neq q_k$) to obtain $\pi_0 \cdot 2m_{q_k} m_{q_j} (1-m_{q_j})$. In total, an insightful pool $q_k$'s expected revenue is $$\pi_0 \cdot \left[\frac{m_{q_k}^2(2-3m_{q_k})}{1-2m_{q_k}} + 2m_{q_k} \sum_{q_j\in Q} m_{q_j}(1-m_{q_j})\right].$$
%\textbf{Case 1: $i\not\in S$.} First, the mining pool $i$ could generate the first block with probability $m_i$. Then it will broadcast this block immediately and get the revenue of one block. So the expected reward is $m_i$.

Second, we analyze the expected revenue of a pool $i \not\in Q$ that adopts the $RHonest$ strategy, which contains two parts. On the one hand, the pool $i$ may win some reward when the insightful pools mine the first block. Denote the pool $i$'s hashing power as $m_i$. As mentioned above, its expected revenue from this case is $\pi_0 \cdot [2m_i \sum_{q_j\in Q} m_{q_j}(1-m_{q_j})]$. On the other hand, with probability $m_i$, it mines a block, reveals it immediately, and gets the block's reward. Thus, in expectation its revenue 
from this case is $\pi_0 \cdot m_i$. In total, the pool $i$'s expected revenue is \[\pi_0 \cdot \left[m_i + 2m_i \sum_{j\in Q} m_j(1-m_j)\right].\]

So $c$ appeared in Proposition~\ref{proposition: ER} is defined as $\pi_0$, which depends on $m_1,\cdots,m_n$ and $(x_1,\cdots,x_n)$.

This finishes the proof.
\end{proof}

\begin{comment}
$q_1(0)$, $q_1(0')$, $q_1(1)$.

$s_1(0)$, $s_1(0')$, $s_1(1)$.

$s_k(0)$, $s_k(0')$, $s_k(1)$.

$0(s_1),0'(s_1)$, $1(s_1)$.

$0(s_k),0'(s_k)$, $1(s_k)$.

$\pi_{0(s_k)},\pi_{0'(s_k)}$, $\pi_{1(s_k)}$.
\end{comment}

\begin{figure}
\begin{center}
    \includegraphics[width=0.6\textwidth]{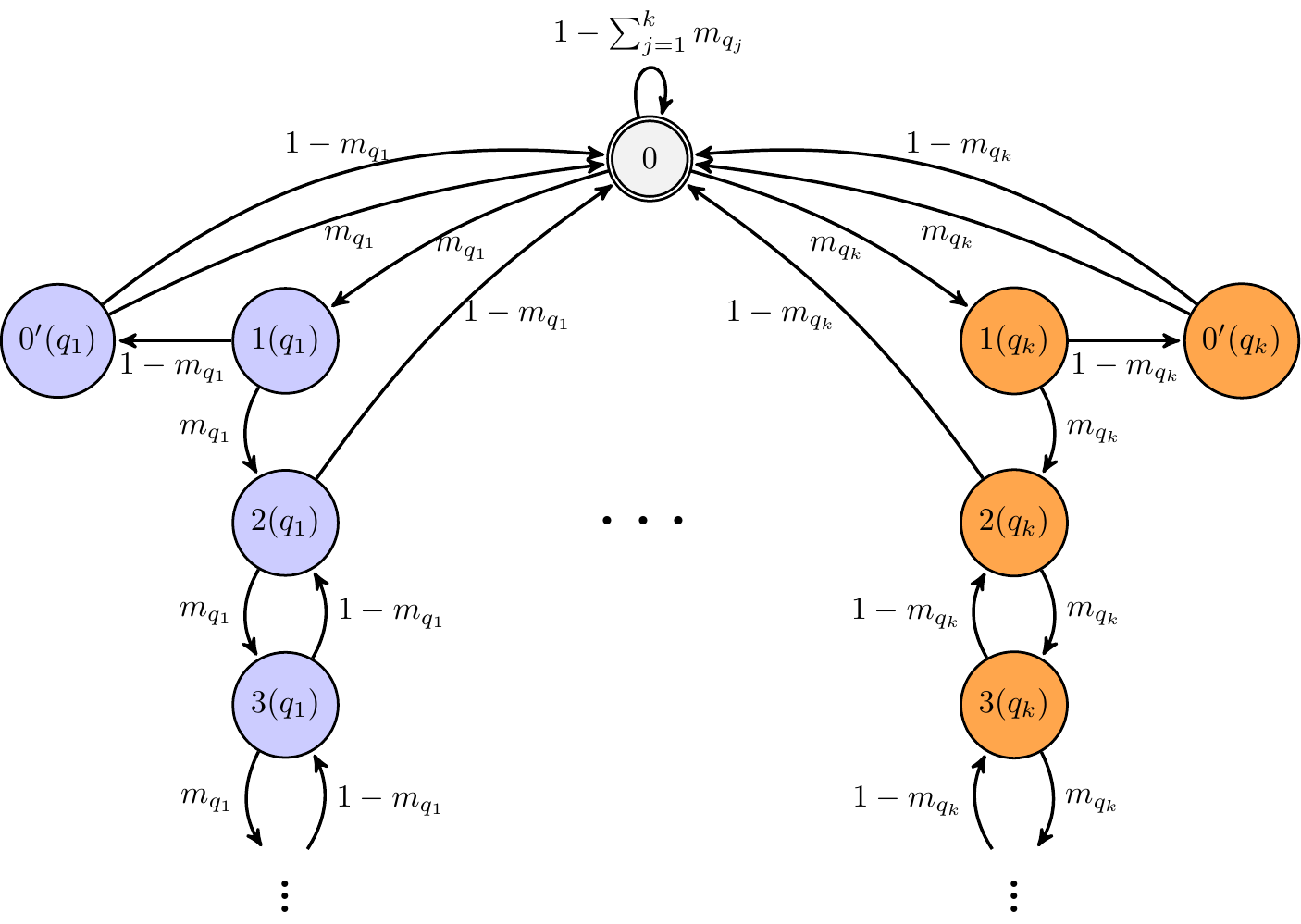}
    \caption{The state transitions in the mining game. Circles represent states, edges represent transitions, and the expression beside an edge represents the corresponding transition probability.}
    \label{fig:game}
    \end{center}
\end{figure}

\subsection{Equilibria Characterization}\label{sec:Characterization}

We denote the strategy profile other than a mining pool $i$ by $x_{-i}$. 

We prove the following characterization theorem of pure Nash equilibria:

\begin{theorem}\label{thm:equilibria}
For an $n$-player mining game $(m_1,\cdots,m_n)$ with $m_1\geq \cdots\geq m_n$, there are three types of pure Nash equilibrium $(x_1,\cdots,x_n)$, where
\begin{enumerate}
    \item $(x_1=\cdots x_n=RHonest)$ is a Nash equilibrium if and only if $m_1\leq 1/3$;
    \item $(x_1=Insightful,x_2=\cdots x_n=RHonest)$ is a Nash equilibrium if and only if $m_1\geq 1/3$ and $m_2\leq g(m_1)$;
    \item $(x_1=x_2=Insightful,x_3=\cdots x_n=RHonest)$ is a Nash equilibrium if and only if $m_1\geq 1/3$ and $m_2\geq g(m_1)$,
\end{enumerate}
where $g(y)\coloneqq \frac{-y^3+2y^2+y-1}{2y^2+4y-3}$.
\end{theorem}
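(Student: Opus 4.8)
The plan is to reduce the Nash condition to a single scalar inequality per player and then solve it explicitly. Fix a profile with insightful set $Q\subseteq[n]$ and consider one pool $i$ contemplating a unilateral switch. Since utilities are relative revenues $RREV_i=ER_i/\sum_j ER_j$, the constant $c=\pi_0$ cancels within each profile, so by Proposition~\ref{proposition: ER} the comparison of $RREV_i$ before and after $i$'s switch reduces, after clearing the (positive) denominators, to the sign of a single marginal quantity. Writing $u_j:=f(m_j)-m_j$ and $w_j:=2m_j(1-m_j)$ and setting $U_Q:=\sum_{j\in Q}u_j$, $A_Q:=\sum_{j\in Q}w_j$, I claim the gain to pool $i$ from playing \emph{Insightful} rather than \emph{RHonest} (others fixed at $Q\setminus\{i\}$) has the same sign as
\[
B_i(Q):=(4m_i-1)\,U_Q+(3m_i-1)\,(1+A_Q).
\]
The engine behind this is the pair of identities $f(y)-y=\frac{y(3y-1)(1-y)}{1-2y}$ and $f(y)-y+2y^2(1-y)=\frac{y(1-y)^2(4y-1)}{1-2y}$, which let the marginal gain factor as a strictly positive prefactor $\frac{m_i(1-m_i)^2}{1-2m_i}$ times $B_i(Q)$. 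Establishing this factorization is the first concrete step.

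Two consequences drive everything else. First, $f(y)-y>0\iff y>1/3$ and $f(y)-y+2y^2(1-y)>0\iff y>1/4$, so $B_i(\emptyset)=3m_i-1$ recovers the bare selfish-mining threshold. Second, $B_i(Q)$ is strictly increasing in $m_i$, with slope $4U_Q+3(1+A_Q)$, which one checks is positive for every feasible $Q$. The equilibrium conditions then read: $B_i(Q)\le 0$ for every $i\notin Q$ (nobody wants to join) and $B_j(Q\setminus\{j\})\ge 0$ for every $j\in Q$ (no insightful pool wants to leave). Using monotonicity in own power together with the equivalence $g(y)\le y\iff y\ge 1/3$ (which factors as $(3y-1)(y^2+y-1)\le 0$), I would show that in any equilibrium $Q$ must be a \emph{prefix} $\{1,\dots,k\}$ of the power-sorted pools: if some larger pool were honest while a smaller pool were insightful, comparing their two inequalities yields a contradiction.

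With $Q=\{1,\dots,k\}$ it remains to solve for each $k$. For $k=0$ the conditions are $B_i(\emptyset)=3m_i-1\le 0$ for all $i$, i.e. $m_1\le 1/3$ (Type~1). For $k=1$, pool $1$'s stay-condition is $B_1(\emptyset)=3m_1-1\ge 0$, and by monotonicity pool $2$'s no-join condition $B_2(\{1\})\le 0$ is the binding one among $i\ge 2$. Solving $B_2(\{1\})=0$ for $m_2$ gives $m_2=\frac{u_1+1+w_1}{4u_1+3+3w_1}$, and substituting $u_1,w_1$ simplifies this ratio exactly to $g(m_1)$; since the slope is positive, $B_2(\{1\})\le 0\iff m_2\le g(m_1)$ (Type~2). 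For $k=2$, pool $2$'s stay-condition $B_2(\{1\})\ge 0$ is precisely $m_2\ge g(m_1)$, and feasibility $g(m_1)\le m_2\le m_1$ forces $g(m_1)\le m_1$, hence $m_1\ge 1/3$, recovering the stated necessity. The pending verifications are that pool~$1$ does not want to leave ($B_1(\{2\})\ge 0$) and that no pool $i\ge 3$ wants to join ($B_i(\{1,2\})\le 0$); both are concrete two-variable inequalities over the region $m_1\ge m_2\ge g(m_1)$ that I would discharge using $m_1\ge 1/3$ and the sign facts, the delicate sub-case being $m_2<1/3$ (where $u_2<0$).

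The main obstacle is ruling out $k\ge 3$, which underlies Corollary~(c). The smallest insightful pool has power at most $1/k\le 1/3$, so $3m_k-1\le 0$ and (when $m_k\le 1/4$) $4m_k-1\le 0$; the goal is to show its stay-condition $B_k(\{1,\dots,k-1\})\ge 0$ cannot hold, forcing it to defect. The difficulty is that $U_{\{1,\dots,k-1\}}$ may be negative when the leading pools are themselves below $1/3$, so a crude sign count does not close the argument; I expect to need the normalization $\sum_i m_i=1$ together with the prefix structure to bound $U$ and $A$ and force $B_k<0$. This impossibility argument, and the companion exchange argument for the prefix property, are where the real work lies; the remaining steps are the routine algebra of evaluating $B_i$ and simplifying to $g$.
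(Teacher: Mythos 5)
Your reduction is correct and, where you carry it out, it matches the paper. The factorization of the deviation gain as $\frac{m_i(1-m_i)^2}{1-2m_i}\bigl[(4m_i-1)U+(3m_i-1)(1+A)\bigr]$ (with $U,A$ summed over the \emph{other} insightful pools) does follow from Proposition~\ref{proposition: ER} by clearing denominators; the slope $4U+3(1+A)$ is positive because $4u_j+3w_j=\frac{2m_j(1-m_j)}{1-2m_j}>0$ termwise; and your closed form $\frac{u_1+1+w_1}{4u_1+3+3w_1}$ does simplify to $g(m_1)$. This is the same argument the paper makes --- compare $RREV_i$ before and after a unilateral switch using the explicit payoffs --- just packaged once in the linear form $B_i(\cdot)$ instead of being redone as a separate rational-function computation in part (1), part (2), and Claim~\ref{claim}; your version is arguably cleaner. (One notational slip: you announce the sign of the gain as $B_i(Q)$ while conditioning on ``others fixed at $Q\setminus\{i\}$''; your later usage --- $B_i(Q)$ for $i\notin Q$, $B_j(Q\setminus\{j\})$ for $j\in Q$ --- is the consistent reading.)

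The genuine gap is the piece you defer in the two-insightful case: showing $B_i(\{1,2\})\leq 0$ for every $i\geq 3$. This is not a routine corner case; it is the substantive half of the ``if'' direction of part (3), and the paper spends the second half of Claim~\ref{claim} on it. Concretely, the paper computes the threshold $h(m_1,m_2)$ (which is your $\frac{U+1+A}{4U+3+3A}$ evaluated at $Q=\{1,2\}$) and closes the argument with the inequality $m_1+m_2+h(m_1,m_2)\geq 1$ over the region $m_1\geq g(m_2)$, $m_2\geq g(m_1)$, so that $h(m_1,m_2)\geq 1-m_1-m_2\geq m_i$; by your own monotonicity of $B_i$ in $m_i$, checking the worst case $m_i=1-m_1-m_2$ is exactly the right move, but the two-variable inequality still has to be proved, including the sub-case $m_2<1/3$ that you flag. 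Until then, part (3) is established only in the ``only if'' direction. By contrast, your concerns about ruling out $k\geq 3$ insightful pools and non-prefix insightful sets are extra work beyond the paper rather than a gap relative to it: the paper proves only the three stated biconditionals and never shows these exhaust all equilibria (this already suffices for Corollaries~\ref{coro: pure}--\ref{coro: at most two} as literally stated, since Corollary~\ref{coro: at most two} asserts only the \emph{existence} of an equilibrium with at most two insightful pools). If you do want the exhaustiveness claim, your $B_i$ formalism is the right vehicle for it, but be aware you would then be proving something stronger than the paper does.
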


Theorem~\ref{thm:equilibria} has the following three corollaries.

\begin{corollary}\label{coro: pure}
Every $n$-player mining game $(m_1,\cdots,m_n)$ has a pure Nash equilibrium.
\end{corollary}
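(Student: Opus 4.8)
The plan is to obtain Corollary~\ref{coro: pure} as an immediate consequence of the characterization in Theorem~\ref{thm:equilibria}, by verifying that the sufficient conditions listed in its three cases jointly exhaust every possible mining game. First I would reduce to the ordered setting: since the payoff formula of Proposition~\ref{proposition: ER} depends only on the set $Q$ of insightful pools and the mining powers, relabeling the pools carries a pure Nash equilibrium of one labeling to a pure Nash equilibrium of any other. Hence there is no loss of generality in assuming $m_1 \geq m_2 \geq \cdots \geq m_n$, which is exactly the hypothesis under which Theorem~\ref{thm:equilibria} is stated, and an equilibrium found for the sorted game pulls back to one for the original.

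With the powers sorted, I would split on the value of $m_1$. If $m_1 \leq 1/3$, then the ``if'' direction of case~1 guarantees that the all-honest profile $(x_1 = \cdots = x_n = RHonest)$ is a pure Nash equilibrium, and we are done. If instead $m_1 \geq 1/3$, I would further split on $m_2$ relative to the threshold $g(m_1)$: whenever $m_2 \leq g(m_1)$, case~2 of the theorem yields the pure equilibrium with exactly pool $1$ insightful, and whenever $m_2 \geq g(m_1)$, case~3 yields the pure equilibrium with pools $1$ and $2$ insightful. Since the two inequalities $m_2 \leq g(m_1)$ and $m_2 \geq g(m_1)$ cannot both fail, one of the subcases always applies. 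Combining the branches, the conditions $\{m_1 \leq 1/3\}$ and $\{m_1 \geq 1/3\}$ tautologically cover all games, so a pure Nash equilibrium exists in every case.

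I do not expect a genuine obstacle here: all of the analytic difficulty has already been absorbed into Theorem~\ref{thm:equilibria}, and the corollary reduces to observing that the three regions described there tile the parameter space, with harmless overlaps along the boundaries $m_1 = 1/3$ and $m_2 = g(m_1)$ where more than one profile happens to be an equilibrium (for mere existence this redundancy is irrelevant). The only point deserving a word is the degenerate case $n = 1$, which falls outside the $m_2$-based branches: there the single pool collects every block regardless of its action, so any pure profile is trivially an equilibrium and existence holds \emph{a fortiori}.
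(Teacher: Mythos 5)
Your proposal is correct and matches the paper's (implicit) argument: the paper derives this corollary directly from Theorem~\ref{thm:equilibria} by the same observation that the three cases $m_1\leq 1/3$, $\{m_1\geq 1/3,\ m_2\leq g(m_1)\}$, and $\{m_1\geq 1/3,\ m_2\geq g(m_1)\}$ exhaust the parameter space. Your added remarks on relabeling and the $n=1$ case are harmless elaborations of the same route.
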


\begin{corollary}\label{coro: 1/3 threshold}
For an $n$-player mining game $(m_1,\cdots,m_n)$, $(RHonest,\cdots,RHonest)$ is a Nash equilibrium if $m_1\leq 1/3$.
\end{corollary}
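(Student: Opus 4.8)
The plan is to work entirely with relative revenues. By Proposition~\ref{proposition: ER} the constant $c$ cancels in $RREV_i=ER_i/\sum_j ER_j$, so I would introduce the normalized quantities $P_i\coloneqq m_i$ if $i\notin Q$ and $P_i\coloneqq f(m_i)$ if $i\in Q$, together with the shared bonus $S\coloneqq\sum_{j\in Q}2m_j(1-m_j)$ and $P\coloneqq\sum_j P_j$, so that $RREV_i=(P_i+m_iS)/(P+S)$. A useful bookkeeping identity is $P-1=\sum_{j\in Q}\bigl(f(m_j)-m_j\bigr)$, obtained from $\sum_j m_j=1$. The whole analysis then rests on one elementary computation,
$$f(m)-m=-\,\frac{m\,(3m-1)(m-1)}{1-2m},$$
which shows $f(m)>m\iff m>\tfrac13$ (with equality at $m=\tfrac13$); this is the source of the $1/3$ threshold.

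Next I would derive a single clean deviation criterion. Writing out $RREV_i$ before and after an honest pool $i\notin Q$ switches to $Insightful$ (which changes $P_i$ from $m_i$ to $f(m_i)$ and $S$ to $S+2m_i(1-m_i)$, and hence also raises every other pool's bonus), cross-multiplying and simplifying, the switch strictly improves pool $i$'s relative revenue if and only if
$$\bigl(f(m_i)-m_i\bigr)\bigl[P+S-m_i(1+S)\bigr]+2m_i^2(1-m_i)(P-1)>0.$$
Specializing to the all-honest profile ($S=0$, $P=1$, $P-1=0$) collapses this to $(f(m_i)-m_i)(1-m_i)>0$, whose sign is exactly the sign of $m_i-\tfrac13$. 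Hence all-$RHonest$ is a Nash equilibrium iff no pool exceeds $1/3$, i.e. iff $m_1\le\tfrac13$, which is Part~1 (the boundary $m_1=\tfrac13$ gives indifference, so the weak inequality is correct).

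For Parts~2 and~3 I would evaluate the same criterion at the candidate profiles $Q=\{1\}$ and $Q=\{1,2\}$. The reverse deviation of the insightful leader reduces to the all-honest computation, giving $m_1\ge\tfrac13$ as the condition for pool~1 to stay insightful. The binding constraint is whether a second pool wants to join: I would show the left-hand side of the criterion is monotone increasing in $m_i$ for a fixed insightful set, so the most tempted honest pool is always the largest one (pool~2). Setting its gain to zero at $Q=\{1\}$ and reducing the resulting rational equation should yield precisely $m_2=g(m_1)$; monotonicity then turns ``$\le$'' into $m_2\le g(m_1)$ for Part~2 and ``$\ge$'' into $m_2\ge g(m_1)$ for Part~3, the two regimes meeting at the shared boundary. (As a consistency check, $g(\tfrac13)=\tfrac13$, so the three conditions tile the whole simplex.) The same monotonicity establishes the prefix structure—that the insightful pools in any equilibrium are the ones of largest power—needed for completeness and for Corollaries~\ref{coro: pure} and~\ref{coro: 1/3 threshold}.

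The hard part will be ruling out a third insightful pool, i.e. proving the deviation gain is nonpositive for pool~3 at the profile $Q=\{1,2\}$ whenever $m_3\le\tfrac13$ (which always holds since $3m_3\le\sum_i m_i=1$). The difficulty is that the two competing terms—the negative ``private'' term $(f(m_3)-m_3)[\cdots]$ and the positive ``bonus'' term $2m_3^2(1-m_3)(P-1)$—both vanish in the critical regime $m_1,m_2,m_3\to\tfrac13$ (forced because $m_1,m_2\ge\tfrac13$ squeezes $m_3$ down to $1/3$), so a crude bound is insufficient and a careful comparison of vanishing rates, or a single global inequality in the variables $(m_1,m_2,m_3)$ subject to $\sum m_i\le1$, is required. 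Establishing this domination, together with the monotonicity of the deviation criterion in $m_i$, is where I expect essentially all the technical work to lie; the rest is the algebraic reduction to $1/3$ and $g$.
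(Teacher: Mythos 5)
Your proof of this corollary is correct and takes essentially the same route as the paper: you compare each pool's relative revenue at the all-honest profile with its relative revenue after a unilateral switch to $Insightful$, and the comparison reduces to the sign of $f(m_i)-m_i$, i.e.\ of $m_i-1/3$, so the profile is an equilibrium exactly when $m_1\leq 1/3$. The general deviation criterion and the discussion of parts 2--3 of Theorem~\ref{thm:equilibria} are not needed for this statement; your specialization to $S=0$, $P=1$ coincides with the computation in the paper's proof of Theorem~\ref{thm:equilibria}~(1), from which the corollary follows immediately.
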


\begin{corollary}\label{coro: at most two}
For every $n$-player mining game $(m_1,\cdots,m_n)$, there is an equilibrium with at most two insightful pools.
\end{corollary}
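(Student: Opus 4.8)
The plan is to derive Corollary~\ref{coro: at most two} directly from the full enumeration of equilibria in Theorem~\ref{thm:equilibria}, supplemented only by the existence guarantee in Corollary~\ref{coro: pure}. The central observation is that each of the three equilibrium types listed in Theorem~\ref{thm:equilibria} has a transparent count of insightful pools: type~1 (all-$RHonest$) has zero insightful pools, type~2 ($x_1=Insightful$, the rest $RHonest$) has exactly one, and type~3 ($x_1=x_2=Insightful$, the rest $RHonest$) has exactly two. Since Theorem~\ref{thm:equilibria} asserts that these are the \emph{only} pure Nash equilibria, \emph{every} pure Nash equilibrium of any mining game automatically has at most two insightful pools. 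Thus the corollary reduces to exhibiting a single pure Nash equilibrium.

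For that existence step I would argue by a clean trichotomy on $m_1$ (recall $m_1\geq\cdots\geq m_n$). If $m_1\leq 1/3$, the type~1 condition is met and $(RHonest,\dots,RHonest)$ is an equilibrium with zero insightful pools. If instead $m_1\geq 1/3$, I would compare $m_2$ against the threshold $g(m_1)$: at least one of $m_2\leq g(m_1)$ or $m_2\geq g(m_1)$ must hold (both hold on the boundary), so either the type~2 condition or the type~3 condition is satisfied, producing an equilibrium with one or two insightful pools respectively. These cases exhaust all parameter settings, so an equilibrium with at most two insightful pools always exists; reading off the insightful-pool count in each surviving type then yields the claim (and simultaneously re-proves Corollary~\ref{coro: pure}).

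There is no genuine analytic obstacle here, since all the difficult work---deriving the deviation inequalities and solving them to obtain the $1/3$ threshold and the threshold curve $g$---is already carried out in the proof of Theorem~\ref{thm:equilibria}. The only point requiring a line of care is the \emph{exhaustiveness} of the three stated conditions: one must verify that the regions $\{m_1\leq 1/3\}$, $\{m_1\geq 1/3,\ m_2\leq g(m_1)\}$, and $\{m_1\geq 1/3,\ m_2\geq g(m_1)\}$ cover the entire feasible parameter space $\{m_1\geq m_2\geq\cdots\geq m_n\geq 0,\ \sum_i m_i=1\}$. This follows immediately from the law of trichotomy applied first to $m_1$ versus $1/3$ and then, in the second branch, to $m_2$ versus $g(m_1)$, where the non-strict inequalities guarantee that the boundaries are covered rather than lost. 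Note in particular that one never needs the explicit algebraic form of $g$ for this argument, only that $g(m_1)$ is a well-defined real number so that trichotomy against $m_2$ applies.
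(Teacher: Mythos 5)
Your proposal is correct and follows essentially the same route the paper intends: the corollary is read off directly from the characterization in Theorem~\ref{thm:equilibria}, using the trichotomy on $m_1$ versus $1/3$ and then $m_2$ versus $g(m_1)$ (with non-strict inequalities covering the boundaries) to guarantee that one of the three equilibrium types, each with at most two insightful pools, always applies. The paper offers no separate proof of this corollary, so your explicit case analysis is exactly the implicit argument spelled out.
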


\begin{proof}[Proof of Theorem~\ref{thm:equilibria} (1)]

Let $x=(H,\cdots,H)$ be the strategy profile such that all mining pools adopt $RHonest$ strategy. 

If all mining pools follow the strategy profile $x$, for a mining pool $i$, we have $ER_i(x)=m_i$ so that $RREV_i(x)=m_i$.

When the mining pool $i$ deviates to choose the $Insightful$ strategy, we have $ER_i(x_{-i},Insightful)=f(m_i)+2m_i^2(1-m_i)$ and $ER_j(x_{-i},Insightful)=m_j+2m_jm_i(1-m_i)$ for any other mining pool $j\neq i$. So for the relative revenue, we have
\[
RREV_i(x_{-i},Insightful)=\frac{f(m_i)+2m_i^2(1-m_i)}{1-m_i+f(m_i)+2m_i(1-m_i)}.
\]

Then $x$ is a Nash equilibrium if and only if every $i\in [n]$ satisfies $RREV_i(x_{-i},Insightful)\leq RREV_i(x)$. This, (given that $0< m_i\leq 1/2$), is equivalent to 

\begin{eqnarray*}
	&&\frac{f(m_i)+2m_i^2(1-m_i)}{1-m_i+f(m_i)+2m_i(1-m_i)}\leq m_i, ~~~\forall i\in[n]\\
	&\Longleftrightarrow& f(m_i)\leq m_i, ~~~\forall i\in[n]\\
	&\Longleftrightarrow& m_i\leq 1/3, ~~~\forall i\in[n]
\end{eqnarray*}

Recall that $m_1\geq \cdots\geq m_n$.
This finishes the proof.

\end{proof}

\begin{proof}[Proof of Theorem~\ref{thm:equilibria} (2)]

Let the strategy profile $x$ be $(Insightful,RHonest,\cdots,RHonest)$, i.e., the mining pool 1 adopts the $Insightful$ strategy and all other mining pools adopt the $RHonest$ strategy.

If all mining pools follow the strategy profile $x$, then (from the previous proof) for the relative revenue, we have
\[
RREV_{1}(x)=\frac{f(m_{1})+2m_{1}^2(1-m_{1})}{1-m_{1}+f(m_{1})+2m_{1}(1-m_{1})}.
\]

From Theorem~\ref{thm:equilibria} (1), we know that $RREV_{1}(x)\geq RREV_{1}(x_{-1},RHonest)$ if and only if $m_{1}\geq 1/3$. This means that ``the mining pool 1 has no incentive to deviate if and only if $m_1\geq 1/3$''.

Now it suffices for us to prove ``the mining pool $i\neq 1$ has no incentive to deviate if and only if $m_i\leq g(m_1)$''.

On the one hand, notice that when all mining pools follow the strategy profile $x$, we have
\[
RREV_i(x)=\frac{m_i+m_i\cdot 2m_1(1-m_1)}{1-m_1+f(m_1)+2m_1(1-m_1)}.
\]

On the other hand, when the mining pool $i$ deviates to choose the $Insightful$ strategy, we have \[ER_i(x_{-i},Insightful)=f(m_i)+2m_i(m_1(1-m_1)+m_i(1-m_i))\] and for any other mining pool $j\neq i$, we have \[ER_j(x_{-i},Insightful)=m_j+2m_j(m_1(1-m_1)+m_i(1-m_i)).\] 

For the relative revenue, we have
\[
RREV_i(x_{-i},Insightful)=\frac{f(m_i)+2m_i(m_1(1-m_1)+m_i(1-m_i))}{f(m_1)+f(m_i)+1-m_1-m_i+2(1-m_1-m_i)(m_1(1-m_1)+m_i(1-m_i))}.
\]

Then \[RREV_i(x_{-i},Insightful)\leq RREV_i(x)\Longleftrightarrow m_i\leq \frac{-m_1^3+2m_1^2+m_1-1}{2m_1^2+4m_1-3},\] i.e., $m_i\leq g(m_1)$.
\end{proof}

Before proving Theorem~\ref{thm:equilibria} (3), we first note that, in the sense of proof, the following Claim~\ref{claim} is a more natural statement. However, we will show that Claim~\ref{claim} and Theorem~\ref{thm:equilibria} (3) are essentially equivalent. Since the statement of (3) in Theorem~\ref{thm:equilibria} is more intuitive given the statements of (1) and (2), we adopt it in our theorem statement. 

\begin{claim}\label{claim}
For an $n$-player mining game $(m_1,\cdots,m_n)$ with $m_1\geq\cdots\geq m_n$, the strategy profile $(x_1=x_2=Insightful,x_3=\cdots x_n=RHonest)$ is a Nash equilibrium if and only if $m_1\geq g(m_2)$ and $m_2\geq g(m_1)$.
\end{claim}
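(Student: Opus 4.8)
The plan is to verify, via Proposition~\ref{proposition: ER}, that the profile $x=(x_1=x_2=Insightful,\,x_3=\cdots=x_n=RHonest)$, for which the insightful set is $Q=\{1,2\}$, admits no profitable unilateral deviation exactly when $m_1\ge g(m_2)$ and $m_2\ge g(m_1)$. There are only two families of deviations to rule out: an insightful pool ($1$ or $2$) switching to $RHonest$, and a refined-honest pool ($i\ge 3$) switching to $Insightful$. I will obtain the two stated inequalities from the first family (which also gives necessity) and then show the second family imposes no further constraint once those inequalities hold.

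\textbf{Insightful-pool deviations.} For pool $2$, the comparison between its relative revenue under $x$ (where $Q=\{1,2\}$) and under the deviation to $RHonest$ (where $Q=\{1\}$) is literally the same pair of quantities already computed in the proof of Theorem~\ref{thm:equilibria}(2), after relabeling so that pool $1$ is the incumbent insightful pool and pool $2$ is the candidate. That proof establishes the equivalence ``$RREV$ of the candidate is weakly smaller as $Insightful$ than as $RHonest$'' $\iff m_2\le g(m_1)$; hence pool $2$ has no incentive to leave iff $m_2\ge g(m_1)$. Interchanging the roles of $1$ and $2$ yields that pool $1$ has no incentive to leave iff $m_1\ge g(m_2)$. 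This disposes of the first family and proves necessity of both conditions.

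\textbf{Honest-pool deviations.} For a pool $i\ge 3$ joining $Q=\{1,2\}$, I will reduce the no-deviation inequality to a single threshold test. Using the identities $y-f(y)=\frac{y(1-y)(1-3y)}{1-2y}$ and $y-f(y)-2y^2(1-y)=\frac{y(1-y)^2(1-4y)}{1-2y}$, the inequality $RREV_i(x_{-i},Insightful)\le RREV_i(x)$ collapses, after clearing the common positive factor $\frac{m_i(1-m_i)^2}{1-2m_i}$, to the linear-in-$m_i$ statement $m_i\le\theta:=\frac{D}{4D-1-P}$, where $P=2m_1(1-m_1)+2m_2(1-m_2)$ and $D=f(m_1)+f(m_2)+(1-m_1-m_2)+P$ depend only on $\{1,2\}$. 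Since $\theta$ is constant given $m_1,m_2$, the binding deviation is the largest honest pool; and because $m_i\le m_3\le 1-m_1-m_2$ for every $i\ge 3$, it suffices to prove $1-m_1-m_2\le\theta$.

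\textbf{Main obstacle.} Two facts must be combined. First, the conditions force $m_1+2m_2\ge 1$: they imply $m_1\ge 1/3$ (if $m_1<1/3$ then $g(m_1)>m_1\ge m_2$, contradicting $m_2\ge g(m_1)$), and on $[1/3,1/2]$ one has $g(y)-\tfrac{1-y}{2}=\frac{(2y-1)(3y-1)}{2(2y^2+4y-3)}\ge 0$, so $m_2\ge g(m_1)\ge\tfrac{1-m_1}{2}$; hence $1-m_1-m_2\le m_2$, confirming that the budget bound $m_3\le 1-m_1-m_2$ is the operative one. Second, and this is the crux, one must verify that the polynomial inequality $1-m_1-m_2\le\theta$, equivalently $(1-m_1-m_2)(4D-1-P)\le D$, holds throughout the region cut out by the two conditions. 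This step is genuinely non-vacuous—the inequality fails for small $m_1,m_2$—and is tight at $m_1=m_2=1/3$, so no soft monotonicity argument is available (enlarging $Q$ raises both $D$ and $P$, which push $\theta$ in opposite directions). I expect to clear denominators by $(1-2m_1)(1-2m_2)$ and exhibit the resulting polynomial as a nonnegative combination of the factors $(3m_1-1)$, $(3m_2-1)$, $(m_1+2m_2-1)$, and $(2m_1+m_2-1)$, each nonnegative under the conditions. Finally, I will record that Claim~\ref{claim} and Theorem~\ref{thm:equilibria}(3) coincide: since $g(1/3)=1/3$ and $m_1\ge m_2$, the condition $m_1\ge g(m_2)$ is equivalent to $m_1\ge 1/3$ on the relevant range.
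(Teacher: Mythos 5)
Your proposal follows essentially the same route as the paper's proof: the deviations of pools $1$ and $2$ are handled by relabeling the equivalence established in the proof of Theorem~\ref{thm:equilibria}(2), and the deviation of an honest pool $i\geq 3$ is reduced to a threshold inequality $m_i\leq h(m_1,m_2)$ which is then discharged by showing $h(m_1,m_2)\geq 1-m_1-m_2\geq m_i$ on the region cut out by $m_1\geq g(m_2)$ and $m_2\geq g(m_1)$. The one step you flag as the crux --- verifying the two-variable polynomial inequality $1-m_1-m_2\leq h(m_1,m_2)$ over that region --- is exactly the step the paper also asserts without a displayed certificate, so your plan is at the same level of completeness as the published argument.
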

\begin{proof}[Proof of Claim~\ref{claim}]
Let the strategy profile $x$ be $(Insightful,Insightful,RHonest,\cdots,RHonest)$, \textit{i.e.}, the first two mining pools adopt the $Insightful$ strategy and all other mining pools adopt the $RHonest$ strategy.

To prove the claim, we will show that ``$m_1\geq g(m_2)$ and $m_2\geq g(m_1)$'' is equivalent to ``every mining pool $i\in[n]$ has no incentive to change its strategy''. From the proof of Theorem~\ref{thm:equilibria} (2), we know that $m_1\geq g(m_2)$ is equivalent to $RREV_1(x)\geq RREV_1(x_{-1},RHonest)$ and $m_2\geq g(m_1)$ is equivalent to $RREV_2(x)\geq RREV_2(x_{-2},RHonest)$, so it remains for us to show ``$m_1\geq g(m_2)$ and $m_2\geq g(m_1)$'' is equivalent to ``for each mining pool $i\in[n]\setminus\{1,2\}$, $RREV_i(x)\geq RREV_i(x_{-i},Insightful)$''.

For a mining pool $i\in\{3,\cdots,n\}$, we have 
\[
RREV_i(x)=\frac{m_i+2m_i(m_1(1-m_1)+m_2(1-m_2))}{f(m_1)+f(m_2)+1-m_1-m_2+2(1-m_1-m_2)(m_1(1-m_1)+m_2(1-m_2))}
\]
and
\[
RREV_i(x_{-i},Insightful)=\frac{f(m_i)+2m_i(\sum_{j\in\{1,2,i\}}m_j(1-m_j))}{1+\sum_{j\in\{1,2,i\}}(f(m_j)-m_j)+2(1-\sum_{j\in\{1,2,i\}}m_j)(\sum_{j\in\{1,2,i\}}m_j(1-m_j))}.
\]

Then we know that $RREV_i(x)\leq RREV_i(x_{-i},Insightful)$ if and only if
\[
m_i\leq \frac{1 - m_1 - 2 m_1^2 + m_1^3 - m_2 + 4 m_1^2 m_2 - 2 m_1^3 m_2 - 2 m_2^2 + 
   4 m_1 m_2^2 + m_2^3 - 2 m_1 m_2^3}{3 - 4 m_1 - 2 m_1^2 - 4 m_2 + 
   4 m_1 m_2 + 4 m_1^2 m_2 - 2 m_2^2 + 4 m_1 m_2^2},
\]
where the right expression is defined as $h(m_1,m_2)$.

In addition, we have
\[
m_1+m_2+h(m_1,m_2)\geq 1
\]
for all $m_1\geq g(m_2)$, $m_2\geq g(m_1)$ and $0\leq m_2\leq m_1\geq 1/2$.
This means $h(m_1,m_2)\geq 1-m_1-m_2\geq m_i$. So we have $RREV_i(x)\leq RREV_i(x_{-i},Insightful)$, and thus $x$ is a Nash equilibrium.

\end{proof}

\begin{proof}[Proof of Theorem~\ref{thm:equilibria} (3)] 
Given the Claim~\ref{claim}, it suffices for us to show the equivalence between ``$m_1\geq g(m_2)$ and $m_2\geq g(m_1)$'' and ``$m_1\geq 1/3$ and $m_2\geq g(m_1)$'', given $0\leq m_2\leq m_1\leq 1/2$.

First notice that, given $m_1\geq m_2$, when $m_1,m_2\in[0,1/2]$ and $m_2\geq g(m_1)$, we have $m_1\geq g(m_1)$. This will derive $m_1\geq 1/3$.

So we only need to prove, given $m_1\geq m_2$ and $m_1,m_2\in[0,1/2]$, when $m_1\geq 1/3$ and $m_2\geq g(m_1)$, we have $m_1\geq g(m_2)$. We will use two simple cases to prove it.

\noindent\textbf{Case 1: $m_1\geq 0.35$.} Since $g(m_2)\leq 0.35$ for $0\leq m_2\leq 1/2$, so our desired result holds.

\noindent\textbf{Case 2: $m_1\leq 0.35$.} Then we have $m_1\in[1/3,0.35]$, so we have $m_2\geq g(m_1)\geq g(0.35)\geq 0.33$.

Notice that $g(m_2)$ is decreasing on $[0.33,1/2]$. So to show $m_1\geq g(m_2)$, it suffices for us to have $m_1\geq g(g(m_1))$, which is true when $m_1$ is on $[1/3,1/2]$.

This finishes the proof.
\end{proof}

% Now we give the proof of \Cref{claim} and finish the whole proof.

\section{Simulation}\label{sec:simulation}
\subsection{Insightful Mining vs. Selfish Mining}
\label{sec: Insightful Mining vs. Selfish Mining}

%In this subsection, we conduct the simulation to explore the long-term revenues of the selfish pool and the insightful pool. In our simulation, the selfish pool adopts the selfish mining strategy, while the insightful pool follow our insightful mining strategy. Both pools have the same mining power. We make all agents in the system to random walk based on their strategy. The simulation ends after 2,000,000,000 steps. Then we calculate the revenue over that period. 

The simulations in this subsection evaluate the effectiveness of the insightful mining strategy. Three agents are considered, \textit{i.e.}, the honest pool, the selfish pool, and the insightful pool. Their interactions are simulated as a discrete-time random walk process. In each step, one of the pools generates a block with the probability proportional to its hashing power, and others respond according to their strategies. The simulation ends after 2,000,000,000 steps. Then we calculate their relative revenue during the process, which is defined as the number of blocks generated by a pool over the number of blocks on the main chain. 

\begin{figure}
    \centering
    \includegraphics[width=\textwidth]{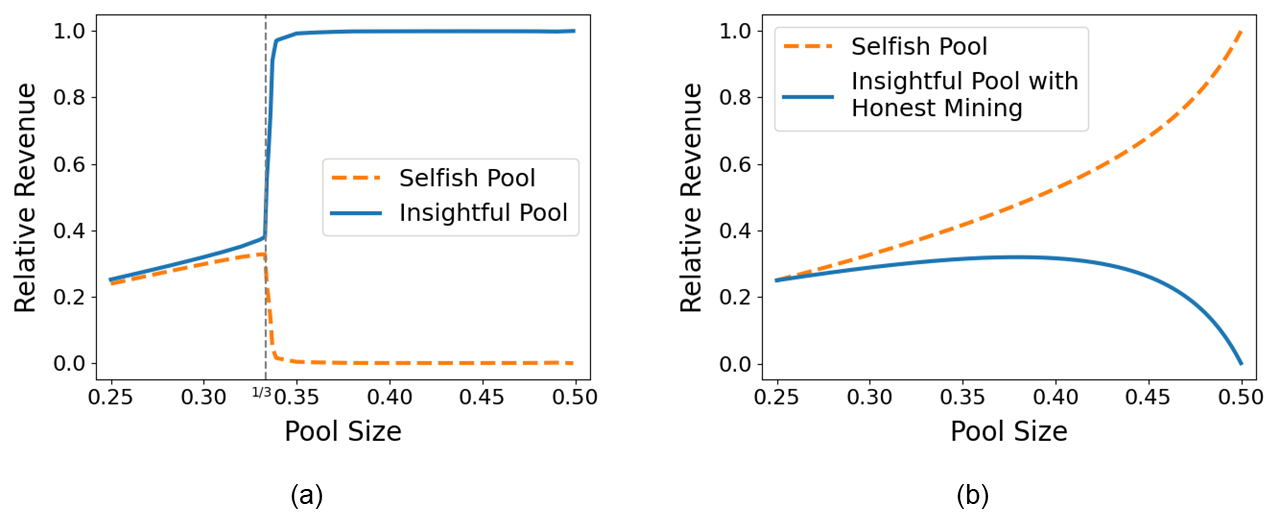}
    % \captionsetup{skip=7.5pt}
    \caption{Relative revenue of the selfish pool and the insightful pool with the same mining power. The selfish pool adopts the selfish mining strategy. (a) The insightful pool adopts the insightful mining strategy. (b) The insightful pool mines honestly.}
    \label{Fig:exp1} 
\end{figure}

Recall that $\alpha$ and $\beta$ are the fractions of hashing power of the selfish pool and the insightful pool, respectively. First, we focus on the scenario where the insightful pool and the selfish pool have the same hashing power, \textit{i.e.}, $\alpha=\beta$. Fig.~\ref{Fig:exp1}(a) visualizes the relative revenue of the insightful pool and the selfish pool when their hashing power belongs to $(0.25,0.5)$. As can be seen, the insightful pool can always gain more revenue than the selfish pool. It is exactly consistent with our theoretical result of Theorem~\ref{thm:domination}.
Surprisingly, if their hashing power is larger than $1/3$ (\textit{i.e.}, $\alpha=\beta>1/3$), the insightful pool can gain most of the revenue. For clear comparison, we also show their relative revenue under the circumstance that the insightful pool mines honestly in Fig.~\ref{Fig:exp1}(b). As mentioned in the Introduction, the insightful pool therefore will suffer heavy losses, which grow rapidly with the pool size increasing. Comparing Fig.~\ref{Fig:exp1}(a) and \ref{Fig:exp1}(b) shows that the insightful mining strategy dramatically help the pool turn things around when facing selfish mining. 

\begin{figure}
    \centering
    \includegraphics[width=0.5\textwidth]{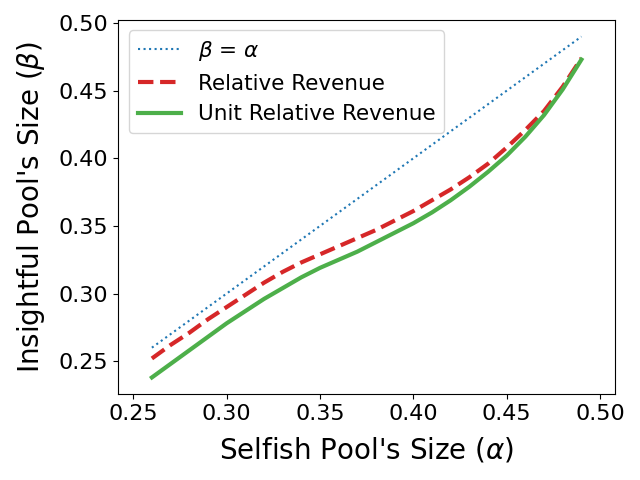}
    \caption{Threshold of the insightful pool's size, above which it could obtain more relative revenue or unit relative revenue than the selfish pool.} 
    \label{Fig:threshold}
\end{figure}

Then we explore the scenario where $\alpha>\beta$, to consider whether less hashing power can also enable the insightful pool to earn more. Here, two definitions of the `more revenue' are studied. One is the aforementioned relative revenue, which corresponds to the dashed line in Fig.~\ref{Fig:threshold}. It demonstrates the threshold above which $RREV_{IM}(\alpha,\beta)>RREV_{SM}(\alpha,\beta)$. The other is the unit relative revenue. The solid line in Fig.~\ref{Fig:threshold} represents the corresponding threshold, above which we have $\frac{RREV_{IM}(\alpha,\beta)}{\beta} > \frac{RREV_{SM}(\alpha,\beta)}{\alpha}$. This curve is below the latter. Both curves have similar trends and they are all below the line of $\beta=\alpha$. It provides compelling evidence that with insightful mining strategy, less computing power can also yield more revenue.

%Insightful Mining vs. Optimal Selfish Mining 

\subsection{Optimal Insightful Mining}
\label{sec: Optimal Insightful Mining}

\begin{figure}
    \centering
    \includegraphics[width=0.6\textwidth]{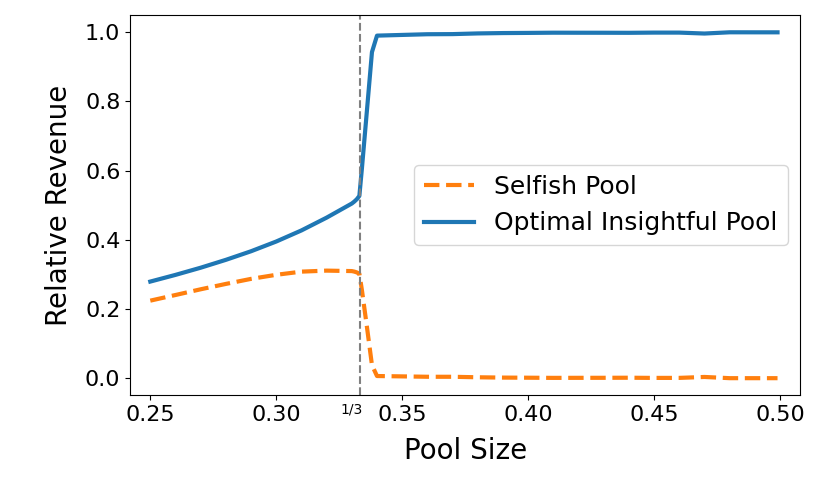}
    \caption{Relative revenue of the selfish pool and the insightful pool under the optimal insightful mining strategy. Both pools have the same mining power.}
    \label{fig:exp2} 
\end{figure}
In this subsection, we explore a more powerful insightful mining strategy by modeling the insightful pool's as a decision problem and solve it by the Markov Decision Process (MDP). First, we gives the description of the problem.

\noindent\textbf{Action Space.}
The insightful pool can take one of the following actions. \textit{(1) adopt to honest:} the insightful pool abandons its private fork to mine on the public chain. This action is always allowed. \textit{(2) adopt to selfish:} the insightful pool abandons its private fork to mine on the selfish branch. This action is feasible only when the selfish pool publishes its selfish branch triggered by the insightful pool publishing its private blocks. \textit{(3) override honest:} the insightful pool publishes just enough blocks on its private branch to overtake the public chain. This action is feasible only if the insightful branch is strictly longer than the public main chain. \textit{(4) override selfish:} the insightful pool publishes just enough blocks on its private branch to overtake the selfish branch. This action is feasible only if the insightful branch is strictly longer than the selfish branch. \textit{(5) wait:} the insightful pool continues to mine its private branch without publishing any block. This action is always feasible. \textit{(6) match:} the insightful pool publishes just enough blocks to equal the length of the longest public chain as the block on the longest public chain is being published, causing a fork.

\noindent\textbf{State Space.} The state of system is defined by a tuple of the form $(l_h, l_s, l_i, fork)$. The first three elements $l_h, l_s, l_i$ denote the length of of the honest branch, selfish branch, and insightful branch respectively. In particular, $l_s = -1$ indicates that the honest pool and the selfish pool mine on the same branch. The last element \textit{fork} contains three possible values, including \textit{relevant}, \textit{irrelevant}, and \textit{active}. It describes the event that just happened in the system: \textit{fork = relevant} when an honest pool just mined a block; \textit{fork = active} when the insightful pool just matched the honest branch; \textit{fork = irrelevant} otherwise.

% We should note that this representation is not perfect: for example, it cannot distinguish the state in which the honest miners mine in a chain mined by the agent with the state that the honest miners mine in a chain mined by the honest miners. But it is good enough.

\noindent\textbf{Transaction and Reward.} The transition matrix and reward matrix have been summarized in Table~\ref{tab2} in the~Appendix~\ref{sec:MDP Table}.

Similar to previous work~\cite{sapirshtein2016optimal}, we truncate this MDP into finite state by setting the maximal length of a branch as 50. 
%In this way, the optimal strategy solved by the simulation implies a lower bound of relative revenue of the truly optimal insightful mining. When improving the truncated threshold, the relative revenue of the insightful pool would be closer to be optimal.
As the objective function of this MDP is not linear, we apply the techniques presented in \cite{sapirshtein2016optimal} to solve this Average Reward Ratio Markov Decision Process (ARR-MDP). 

%As shown in the figure, the agent has significant superiority to selfish mining when their mining power are equal and greater than 0.25, and almost dominant when greater than 1/3.
Fig.~\ref{fig:exp2} shows the relative revenue of two pools under the optimal insightful strategy. It closely resembles Fig.~\ref{Fig:exp1}(a). Comparing two figures, we can find that when the mining power is less than 1/3, the optimal insightful mining strategy helps the insightful pool to gain more revenue than before. The detailed policy tells us that once the insightful branch is overridden during the mining competition, sticking to it for a while is a better choice than giving up immediately.

\subsection{Equilibria Visualization}
\label{sec: Equilibria Visualization}

\begin{figure}
    \begin{minipage}[t]{\linewidth}
        \centering
      \includegraphics[width=0.95\textwidth]{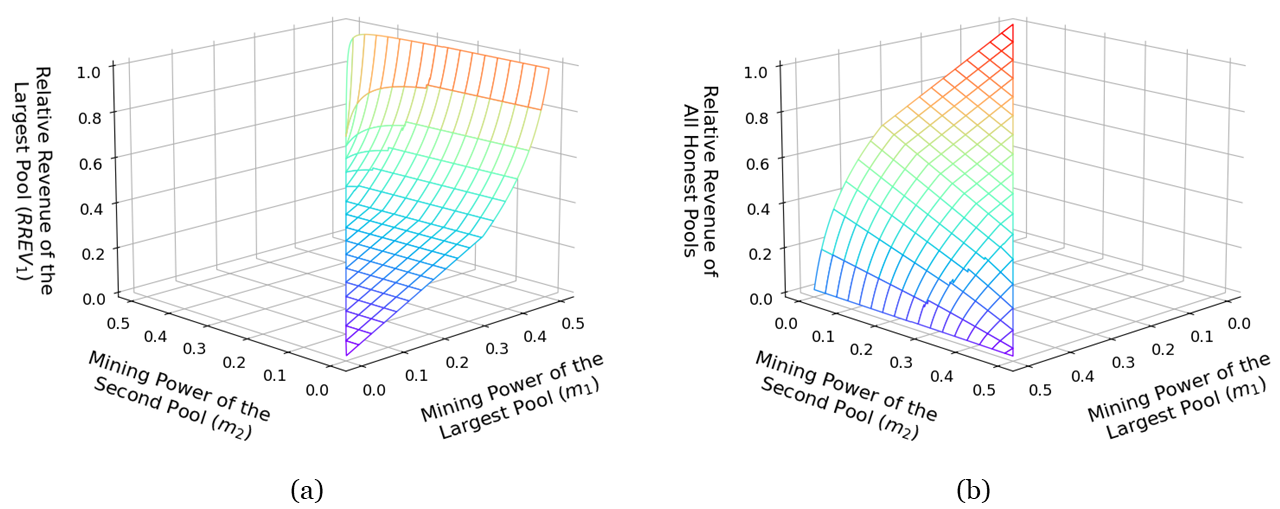}
    \end{minipage}
    \begin{minipage}[t]{\linewidth}
        \centering
     \includegraphics[width=0.95\textwidth]{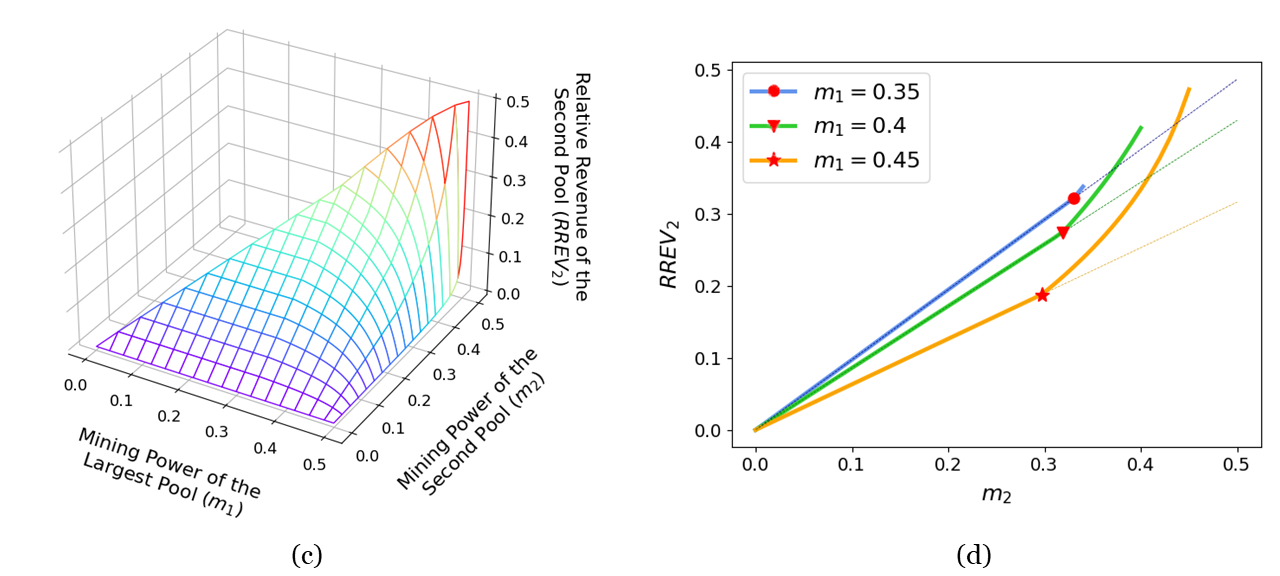}
    \end{minipage}
    % \captionsetup{skip=7.5pt}
    \caption{The relative revenue of pools under different distribution of the mining power.}
    \label{Fig:equilibrium} 
\end{figure}

In this subsection, we visualize the equilibrium results of Theorem \ref{thm:equilibria}. For the $n$ pools with ordered mining power $m_1 \geq \cdots \geq m_n$, the first player's equilibrium strategy only depends on its own mining power (\textit{i.e.}, $m_1$). For the players $i\in\{3,\cdots,n\}$, in any case, they will always be honest in equilibrium (referred to as \textit{honest pools} in the following). The strategy of the second pool, however, is jointly decided by both $m_1$ and $m_2$. For clear understanding, we visualize their equilibrium strategy and utility in Fig.~\ref{Fig:equilibrium}. 

For Fig.~\ref{Fig:equilibrium}(a)$\sim$(c), the X-axis is the mining power of the largest pool (\textit{i.e.}, $m_1$), which ranges in $(0,0.5)$. The Y-axis is the mining power of the second (largest) pool (\textit{i.e.}, $m_2$). Based on the assumption of $m_2 \leq m_1$, the possible values of $m_2$ are between 0 and $m_1$. The Z-axis of these three figures correspond to the relative revenue of the largest pool (\textit{i.e.}, $RREV_1$), the other pools, and the second pool (\textit{i.e.}, $RREV_2$) respectively. As can be seen, $RREV_1$ ($RREV_2$) increases with the increasing $m_2$, while the relative revenue of all honest pools decreases dramatically. An interesting observation is that, all these three figures surfaces have the same structure: when $m_1 \leq 1/3$, they are actually three planes. This is because in the circumstances, all pools will mine honestly in equilibrium, thus each play's revenue is proportional to its mining power. 

When $m_1 > 1/3$, as mentioned above, the second pool will take different strategies, depending on its $m_2$. For clear understanding, we further illustrate its relative revenue in the form of a 2D diagram in Fig.~\ref{Fig:equilibrium}(d). There are three lines, corresponding to three values of the largest pool's mining power (\textit{i.e.}, $m_1$=0.35, 0.4, and 0.45 respectively). For each line, the X-axis $m_2$ varies between 0 and $m_1$. As we can see, $RREV_2$ grows linearly first, then increases abruptly. The red mark on the line demonstrates the inflection point, after which the the second pool chooses to adopt the insightful mining strategy. By comparison, it can be found that as $m_1$ increases, the second pool will take insightful mining at a lower power level.

\section{Discussion}
\label{sec:discussion}
In blockchain, the action of planting a spy in the pool has been deeply discussed in the context of Block Withholding Attack~\cite{rosenfeld2011analysis,eyal2015miner}. In such an attack, the attacker infiltrates miners into opponent pools to reduce their revenue. The undercover miner sends only partial solutions (\textit{i.e.}, proofs of work) to the pool manager to share rewards. If it luckily finds a full solution which means a valid block, the undercover miner will discard the full proof of work directly, causing a loss to the victim pool. Our work explores, for the first time, the idea of spy in the selfish mining attack. It will shed new light upon the researchers in the field.% and further studies regarding the role of spy would be worthwhile.

Infiltrating spies dramatically expands the action spaces that a pool can take to counteract the selfish mining attack. Besides the insightful mining, other
% coping
strategies are worth exploring. Here, we roughly describe a potential idea. Recalling that %our insightful mining strategy only utilizes the number of hidden blocks, but 
the spy actually can extract the hash value of the latest hidden block from the new task issued by the pool manager. With this information, other pools can mine directly behind the latest block, although its full contents are not yet known.\footnote{Such an idea was also discussed in~\cite{sompolinsky2018bitcoin}. In that context, the strategic miner mines on a newly generated block directly even before it is validated. To avoid potential conflict, the miner can choose to embed no transaction in the block being mined, and just try to win the potential block rewards.}
By this strategy, all pools could follow the longest chain, which makes the selfish mining ineffective. In other words, keeping the block secretly for the selfish pool is equivalent to revealing it honestly, which extremely benefits the blockchain system. Nevertheless, such a strategy might not be the best choice for the strategic mining pools. Further research should be undertaken to investigate the optimal mining strategy. It is also worthwhile to extend the action of planting spies to other blockchain scenarios.

% Bibliography
\newpage
\bibliographystyle{ieeetr}

% Appendix
\newpage
\appendix
\section{Detailed Analysis of State Transitions}\label{sec:appendixA}
This section explains Table~\ref{tab:transition} in detail.

\begin{enumerate}
    \renewcommand{\labelenumi}{(\theenumi)}
    \item $Pr[(0,0),(0,0)] = 1 - \alpha - \beta$, $r[(0,0),(0,0)] = (1,0,0)$  \\
    This transition occurs when the honest pool finds the first block in this round of competition. The selfish pool and insightful pool will accept this block and mine after it. Thus, the honest pool obtains the revenue of one block.
    
    \item $Pr[(0,0),(1,0)] = \alpha$, $r[(0,0),(1,0)] = (0,0,0)$    \\
    This transition occurs when the selfish pool finds the first block and hides it immediately. In the state of $(1,0)$, the selfish pool faces the risk of being overtaken by others. The revenue of this block is uncertain, and will be determined in the later transition.
    
    %\item $Pr[(1,0),(0,0)] = 1-\alpha$, \\
    %$r[(1,0),(0,0)] = (\frac{(1-\alpha-\beta)(3-3\alpha+\beta)}{2},\frac{(1+3\alpha-\beta)(1-\alpha)}{2},\frac{\beta(3-3\alpha+\beta)}{2})$    \\
    %This transition contains two steps actually. Recall that the selfish pool hides a block and others have no block in the state of $(1,0)$. For the state transition from $(1,0)$ to $(0,0)$, there are two cases to discuss. 
    \item $Pr[(1,0),(0',0)] = 1-\alpha-\beta$, $r[(1,0),(0',0)] = (0,0,0)$    \\
    This transition occurs when the honest pool finds a block and publishes it directly. Then the selfish pool is forced to reveal its private block. It brings the system to state $(0',0)$, as Fig.~\ref{fig:omitted_states}(a) shows. In the state $(0',0)$, the honest branch and the selfish branch are of the same length (\textit{i.e.}, $l_h=l_s=1$). Faced with this situation, the honest pool uniformly follows one of them to mine the next block. The selfish pool persistently works on its branch, while the insightful pool goes ahead with the honest branch. Thus in state $(0',0)$, the total mining power on the honest branch and selfish branch is $\frac{1-\alpha-\beta}{2}+\beta$ and $\frac{1-\alpha-\beta}{2}+\alpha$ respectively. Both branches are possible to win. It will be determined by the next block. So the revenue of the block in this transition is uncertain and will be determined later. 
    
    \item  $Pr[(0',0),(0,0)] = 1$, $r[(0',0),(0,0)] = (\frac{3-3\alpha-\beta}{2},\frac{1+3\alpha-\beta}{2},\beta)$ \\
    This transition occurs when a block is generated, which breaks the tie in the state $(0',0)$ and also decides the winner. For this newly generated block, there are four possibilities. 
    \begin{enumerate}
        \item[(i)] With the probability $\frac{1-\alpha-\beta}{2}$, the honest pool finds a block on the honest branch. It invalidates the selfish branch and enables that the honest pool gets two blocks' revenue. 
        \item[(ii)] With the probability of $\beta$, the insightful pool finds a block on the honest branch. Then honest pool and the insightful pool obtain one block's revenue respectively. 
        \item[(iii)] With probability $\frac{1-\alpha-\beta}{2}$, the honest pool finds a block on the selfish branch. Then honest pool and the selfish pool obtains one block's revenue respectively. 
        \item[(iv)] With the probability of $\alpha$, the selfish pool finds a block on the its own branch and gets two blocks' revenue.
    \end{enumerate}
    To sum up, we could obtain that \\
    (a) the revenue of the honest pool is $\frac{1-\alpha-\beta}{2} \cdot 2 + \beta \cdot 1 + \frac{1-\alpha-\beta}{2} \cdot 1 = \frac{3-3\alpha-\beta}{2}$; \\
    (b) the revenue of the selfish pool is $\frac{1-\alpha-\beta}{2} \cdot 1 + \alpha \cdot 2 = \frac{1+3\alpha-\beta}{2}$; \\
    (c) the revenue of the insightful pool is $\beta \cdot 1 = \beta$.
    
    \item $Pr[(1,0),(1,0')] = \beta$, $r[(1,0),(1,0')] = (0,0,0)$ \\
    This transition occurs when the insightful pool finds a block and publishes it immediately. This event also promotes the selfish pool to reveal its hidden block. As a result, the system transfers to the state $(1,0')$, which we show in Fig.~\ref{fig:omitted_states}(b). Similar to what we discussed in the item (3), two branches in the state $(1,0')$ have the same length. The honest pool will uniformly choose one branch to work while the selfish pool and the insightful pool will mine after their own branches. Because which branch will win is still uncertain, we calculate the revenue of the block in this transition later.
    
    \item $Pr[(1,0'),(0,0)] = 1$, $r[(1,0'),(0,0)] = (1-\alpha-\beta,\frac{1+3\alpha-\beta}{2},\frac{1-\alpha+3\beta}{2})$    \\
    This transition occurs when a block is generated, which breaks the tie in the state $(1,0')$ and also decides the winner. Similar to the item (4), there are also four possibilities for the newly generated block.
    \begin{enumerate}
        \item[(i)] With probability $\frac{1-\alpha-\beta}{2}$, the honest pool finds a block on the selfish branch. Then honest pool and the selfish pool obtains one block's revenue respectively. 
        \item[(ii)] With the probability of $\alpha$, the selfish pool finds a block on the its own branch and gets two blocks' revenue.
        \item[(iii)] With the probability $\frac{1-\alpha-\beta}{2}$, the honest pool finds a block on the insightful branch. Then honest pool and the insightful pool obtain one block's revenue respectively.
        \item[(iv)] With the probability of $\beta$, the insightful pool finds a block on its insightful branch and thus gets two blocks' revenue.
    \end{enumerate}
    Then we could obtain that \\
    (a) the revenue of the honest pool is $\frac{1-\alpha-\beta}{2} \cdot 1 + \frac{1-\alpha-\beta}{2} \cdot 1 = 1-\alpha-\beta$; \\
    (b) the revenue of the selfish pool is $\frac{1-\alpha-\beta}{2} \cdot 1 + \alpha \cdot 2 = \frac{1+3\alpha-\beta}{2}$; \\
    (c) the revenue of the insightful pool is $\frac{1-\alpha-\beta}{2} \cdot 1 + \beta \cdot 2 = \frac{1-\alpha+3\beta}{2}$.

    %Fig.~\ref{fig:details}(a) shows the detailed state transitions from $(1,0)$ to $(0,0)$. Combining the above two cases, we could get $Pr[(1,0),(0,0)] = 1-\alpha$, $r[(1,0),(0,0)] = (\frac{(1-\alpha-\beta)(3-3\alpha+\beta)}{2},\frac{(1+3\alpha-\beta)(1-\alpha)}{2},\frac{\beta(3-3\alpha+\beta)}{2})$.
        
\begin{figure}[!t]
\includegraphics[width=\textwidth]{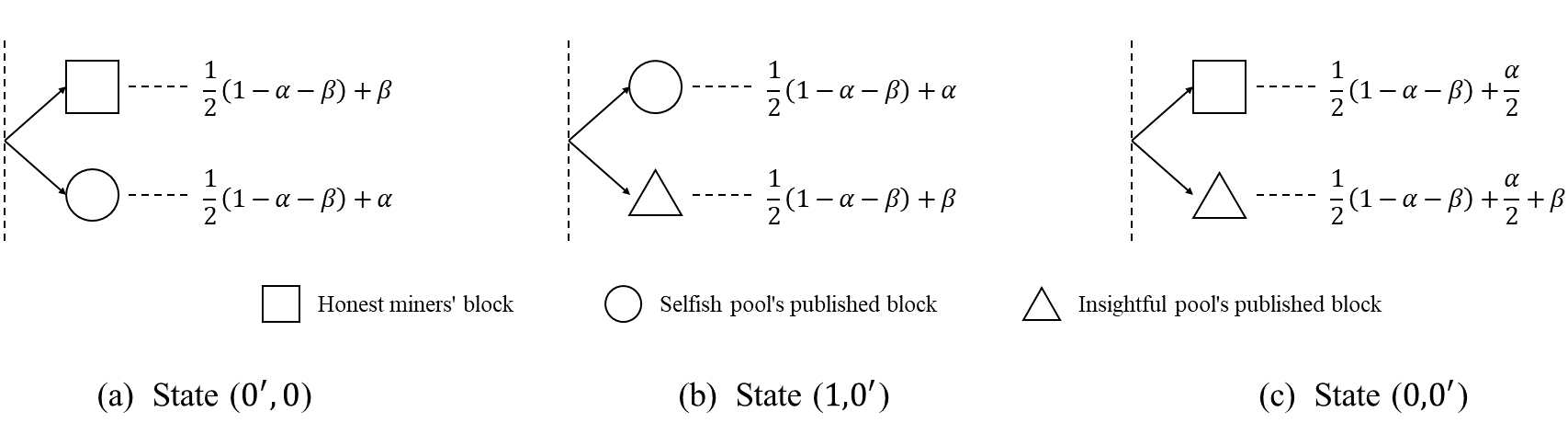}
\caption{Illustration of the states.} 
\label{fig:omitted_states}
\end{figure}

    \item $Pr[(x,0),(x+1,0)] = \alpha$, $r[(x,0),(x+1,0)] = (0,0,0)$, $\forall x \geq 1$    \\
    This transition occurs when the selfish pool finds a block and keeps it in secret. This action brings no revenue instantly. The reward of this block will be given to the selfish pool when it is released.
    
    \item $Pr[(2,0),(0,0)] = 1-\alpha$, $r[(2,0),(0,0)] = (0,2,0)$  \\
    This transition occurs when the insightful pool or the honest pool finds a block, which prompts the selfish pool to reveal its private branch of two blocks. It enables the selfish pool to obtain two blocks' revenue.
    
    \item $Pr[(x+1,0),(x,0)] = 1-\alpha$, $r[(x+1,0),(x,0)] = (0,1,0)$, $\forall x \geq 2$  \\
    This transition occurs when the insightful pool or the honest pool finds a block, and the selfish pool also reveals one private block at the same height. Because the selfish pool still has more blocks in secret, its selfish branch will become the main chain eventually. In this transition, the selfish pool obtain one block's revenue.
    
    \item $Pr[(0,0),(0,1)] = \beta$, $r[(0,0),(0,1)] = (0,0,0)$   \\
    This transition occurs when the insightful pool finds the first block and hides it immediately. In the state of $(0,1)$, the insightful pool faces the risk of being overtaken by others. The revenue of this block is uncertain, and will be determined in the later transition.
    
\begin{figure}[!t]
\includegraphics[width=\textwidth]{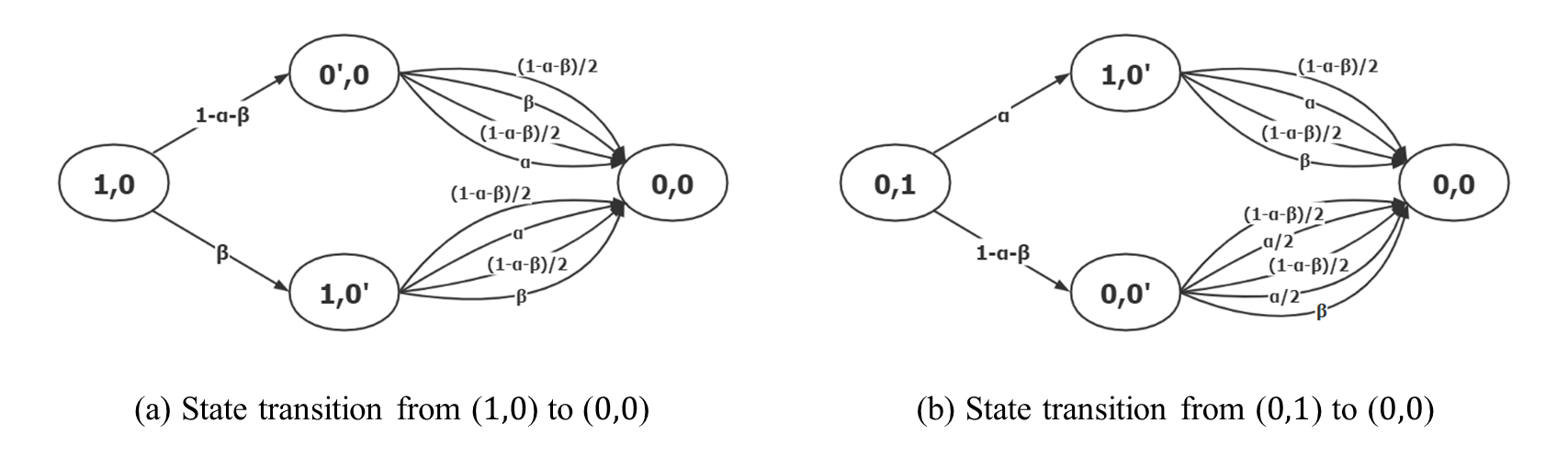}
\caption{Detailed state transitions.}
\label{fig:details}
\end{figure}

    %\item $Pr[(0,1),(0,0)] = 1-\beta$,  \\
    %$r[(0,1),(0,0)] = (\frac{(1-\alpha-\beta)(3-3\beta)}{2},\frac{\alpha(3+\alpha-3\beta)}{2},\frac{1+2\beta-\alpha^2-3\beta^2}{2})$   \\
    %This transition contains two steps actually. Recall that the insightful pool hides a block and others have no block in the state of $(0,1)$. For the state transition from $(0,1)$ to $(0,0)$, there are two cases to discuss.
    \item $Pr[(0,1),(1,0')] = \alpha$, $r[(0,1),(1,0')] = (0,0,0)$ \\
    This transition occurs when the selfish pool finds a block and also hides it. After observing this situation, the insightful pool reveals its hidden block and the selfish pool does the same subsequently. This brings the system to state $(1,0')$, which we show in Fig.~\ref{fig:omitted_states}(b). As discussed in the item (5), we will calculate the revenue of this block in the subsequent transition. 
    
    \item $Pr[(0,1),(0,0')] = 1-\alpha-\beta$, $r[(0,1),(0,0')] = (0,0,0)$    \\
    This transition occurs when the honest pool finds a block, promoting the insightful pool to reveal its hidden block. Then the system comes to a new state of $(0,0')$. As Fig.~\ref{fig:omitted_states}(c) shows, there are two branches, namely, the honest branch and the insightful branch. The selfish pool and the honest pool uniformly choose one of branches to follow, while the insightful pool sticks to its own branch. The revenue of the block will be determined later.
    
    \item $Pr[(0,0'),(0,0)]=1$, $r[(0,0'),(0,0)] = (\frac{3-2\alpha-3\beta}{2},\alpha,\frac{1+3\beta}{2})$  \\
    This transition occurs when a block is generated, which breaks the tie in the state $(0,0')$ and also decides the winner. There are five possibilities for the newly generated block.
    \begin{enumerate}
        \item[(i)] With the probability $\frac{1-\alpha-\beta}{2}$, the honest pool finds a block on the honest branch. It enables the honest pool to get two blocks' revenue. 
        \item[(ii)] With the probability of $\frac{\alpha}{2}$, the selfish pool finds a block on the honest branch. Then the honest pool and the selfish pool obtain one block's revenue respectively. 
        \item[(iii)] With the probability $\frac{1-\alpha-\beta}{2}$, the honest pool finds a block on the insightful branch. Then the insightful pool and the honest pool obtain one block's revenue respectively. 
        \item[(iv)] With the probability of $\frac{\alpha}{2}$, the selfish pool finds a block on the insightful branch. Then the insightful pool and the selfish pool obtain one block's revenue respectively. 
        \item[(v)] With the probability of $\beta$, the insightful pool finds a block on the its own branch and gets two blocks' revenue.
    \end{enumerate}
    Then we could obtain that   \\
    (a) the revenue of the honest pool is $\frac{1-\alpha-\beta}{2} \cdot 2 + \frac{\alpha}{2} \cdot 1 + \frac{1-\alpha-\beta}{2} \cdot 1 = \frac{3-2\alpha-3\beta}{2}$;    \\
    (b) the revenue of the selfish pool is $\frac{\alpha}{2} \cdot 1 + \frac{\alpha}{2} \cdot 1 = \alpha$; \\
    (c) the revenue of the insightful pool is $\frac{1-\alpha-\beta}{2} \cdot 1 + \frac{\alpha}{2} \cdot 1 + \beta \cdot 2 = \frac{1+3\beta}{2}$.
    % Fig.~\ref{fig:details}(b) shows the detailed state transitions from $(0,1)$ to $(0,0)$. By integrating these two cases, we could get $Pr[(0,1),(0,0)] = 1-\beta$, and $r[(0,1),(0,0)] = (\frac{(1-\alpha-\beta)(3-3\beta)}{2},\frac{\alpha(3+\alpha-3\beta)}{2},\frac{1+2\beta-\alpha^2-3\beta^2}{2})$.
    
    \item $Pr[(0,1),(0,2)] = \beta$, $r[(0,1),(0,2)]=(0,0,0)$   \\
    This transition occurs when the insightful pool finds a block and keeps it in secret. This action brings no revenue instantly. The reward of this block will be given to the insightful pool later.
    
    \item $Pr[(0,2),(0,0)] = 1-\beta$, $r[(0,2),(0,0)] = (0,0,2)$   \\
    This transition occurs when the selfish pool or the honest pool find a block. It prompts the insightful pool to reveal its private branch and therefore receive two blocks' revenue.
    
    \item $Pr[(x,y),(x,y+1)] = \beta$, $r[(x,y),(x,y+1)] = (0,0,0)$, $\forall x \in \{0'\}\bigcup \mathbb{N}$, $y \geq 2$    \\
    This transition occurs when the insightful pool finds a block and keeps it in secret. The reward of this block will be given to the insightful pool later.
    
    \item $Pr[(0,y),(0,y-1)] = 1-\alpha-\beta$, $r[(0,y),(0,y-1)] = (0,0,1)$, $\forall y \geq 3$    \\ 
    % 0,3->0,2
    This transition occurs when the honest pool finds a block. It reduces the insightful pool's leads by one. The insightful pool will not reveal its private block of the same height immediately, to avoid being detected by others. This insightful branch, however, will win as the main chain eventually. In order to facilitate the analysis, we decompose the whole chain rewards and calculate each block reward in the transition when the insightful pool's leads (\textit{i.e.}, $y$) decrease. Thus, in this transition, the insightful pool obtains one block's reward.
    
    \item $Pr[(x,y),(x+1,y-1)] = \alpha$, $r[(x,y),(x+1,y-1)] = (0,0,1)$, $\forall x \geq 0, y \geq 3$   \\
    % 0,3->1,2
    This transition occurs when the selfish pool finds a block. Then the insightful pool's leads decrease by one, but it will receive one block's revenue for this.
    
    \item $Pr[(1,y),(0',y)] = 1-\alpha-\beta$, $r[(1,y),(0',y)] = (0,0,0)$, $\forall y \geq 2$  \\
    % 1,2->0',2
    This transition occurs when the honest pool finds a block, forcing the selfish pool to reveal its only remaining block. This action has no effect on the insightful pool's leads which are no less than two, thus bringing no revenue.
    
    \item $Pr[(2,y),(0,y)] = 1-\alpha-\beta$, $r[(2,y),(0,y)] = (0,0,0)$, $\forall y \geq 2$  \\
    % 2,2->0,2
    This transition occurs when the honest pool finds a block. It forces the selfish pool to reveal its private branch of two blocks. However, it does not influence the insightful pool's leads. Thus, there is no revenue in this transition.
    
    \item $Pr[(x,y),(x-1,y)] = 1-\alpha-\beta$, $r[(x,y),(x-1,y)] = (0,0,0)$, $\forall x \geq 3, y \geq 2$    \\
    % 3,2->2,2
    This transition occurs when the honest pool finds a block, and the selfish pool reveals one private block at the same height. These is also no revenue.
    
    \item $Pr[(x,2),(0,0)] = \alpha$, $r[(x,2),(0,0)] = (0,0,2)$, $\forall x \geq 1$  \\
    % 1,2->0,0
    This transition occurs when the selfish pool finds a block. Then the insightful pool's leads reduce to one. It makes the insightful pool to reveal all its private blocks, which override other branches. The insightful pool obtains the last two blocks' revenue in this transition.
    
    \item $Pr[(0',2),(0,0)] = 1-\beta$, $r[(0',2),(0,0)] = (0,0,2)$     \\
    % Step1,2,3
    This transition occurs when the selfish pool or the honest pool finds a block. In the state of $(0',2)$, the selfish pool reveals all its blocks, and the honest pool has a public branch of the same length, and the insightful pool still has two blocks' leads. As the the selfish pool or the honest pool finds a new block, the tie is broken and the insightful pool's leads reduce to one. It prompts the insightful pool to reveal all its private blocks, which override other branches. The insightful pool obtains the last two blocks' revenue in this transition.

    \item $Pr[(0',y),(0,y-1)] = 1-\beta$, $r[(0',y),(0,y-1)] = (0,0,1)$, $\forall y \geq 3$ \\
    This transition occurs when the selfish pool or the honest pool finds a block. This newly generated block break the tie of them in the state of $(0',y)$, decreases the insightful pool' lead by one. The insightful pool receives one block's revenue for this.
    
\end{enumerate}

% \section{Missing Proof from \Cref{sec:Characterization}}\label{sec: proof of claim}

\section{Summary of MDP}\label{sec:MDP Table}
What we really care is the relative revenue of the insightful pool. So we use $(reward_h + reward_s, reward_i)$ to represent the expected reward. Define $l_s^*\coloneqq l_h$ if $l_s = -1$ otherwise $l_s^*\coloneqq l_s$.
\renewcommand{\thefootnote}{\fnsymbol{footnote}}

% Please add the following required packages to your document preamble:
% \usepackage{multirow}
\begin{table}[ht]
\caption{A description of the transition and reward matrices in the decision problem.}\label{tab2}
\begin{tabular}{|c|l|l|l|l|}
\hline
\multicolumn{1}{|l|}{State $\times$ Action} & State  & Probability  & Reward  & Condition                              \\ \hline
\multirow{3}{*} {\begin{tabular}[c]{@{}c@{}}$(l_h,l_s,l_i,\cdot)$,\\$adopt$\end{tabular}}         & $(0,l_s^*-l_h,0,irrelevant)$ & 1 & $(l_h, 0)$  & $l_i < l_h$ \\ \cline{2-5} 
    & $(0,0,0,irrelevant)$            & 1                    & $(l_s, 0)$   & $l_i \ge l_h$, $l_s = l_i + 1 \ge 2$ \\ \cline{2-5} 
                                                                                                   & $(0,l_s-l_i,0,irrelevant)$    & 1                    & $(l_i, 0)$   & $l_i \ge l_h, l_s \ge l_i + 2$     \\ \hline
\multicolumn{1}{|c|}{\begin{tabular}[c]{@{}l@{}}$(l_h,l_s,l_i,\cdot)$,\\  $override_s$\end{tabular}} & $(0,0,l_i-l_s-1,irrelevant)$  & 1                    & $(0, l_s+1)$ & $l_i > l_s$                            \\ \hline
\multirow{8}{*}{\begin{tabular}[c]{@{}c@{}}$(l_h,l_s,l_i,   fork)$,\\  $wait$\end{tabular}}          & $(l_h+1,l_h+1,l_i, relevant)$ & $\alpha$             & (0,0)        & $l_s = -1$                             \\ \cline{2-5}                & $(l_h,l_s+1,l_i, relevant)$   & $\alpha$             & (0,0)        & $l_s \not= -1$                         \\ \cline{2-5} 
               & $(l_h,l_s,l_i+1, active)$     & $\beta$              & (0,0)        & fork = active                          \\ \cline{2-5} 
                                                                                                   & $(l_h,l_s,l_i+1, irrelevant)$ & $\beta$              & (0,0)        & fork $\not=$ active                    \\ \cline{2-5} 
                                                                                                   & $(l_h+1,l_h+1,l_i,relevant)$  & $1-\alpha-\beta$     & (0,0)        & $l_s^* \le l_h$                        \\ \cline{2-5} 
                                                                                                   & $(l_s,l_s,l_i,relevant)$      & $1-\alpha-\beta$     & (0,0)        & $l_s = l_h + 2$                        \\ \cline{2-5} 
                                                                                                   & $(l_h+1,-1,l_i,relevant)$     & $1-\alpha-\beta$     & (0,0)        & $l_s = l_h + 1$                        \\ \cline{2-5} 
                                                                                                   & $(l_h+1,l_s,l_i,relevant)$    & $1-\alpha-\beta$     & (0,0)        & $l_s > l_h + 2$                        \\ \hline
\multirow{9}{*}{\begin{tabular}[c]{@{}c@{}}$(l_h,l_s,l_i,   fork)$, \\ $match$\footnotemark[2]\end{tabular}}         & $(l_h+1,l_h+1,l_i, relevant)$ & $\alpha$             & (0,0)        & $l_s = -1$                             \\ \cline{2-5} 
                                                                                                   & $(l_h,l_s+1,l_i, relevant)$   & $\alpha$             & (0,0)        & $l_s \not= -1$                         \\ \cline{2-5} 
                                                                                                   & $(l_h,l_s,l_i+1, active)$     & $\beta$              & (0,0)        & fork = active                          \\ \cline{2-5} 
                                                                                                   & $(l_h,l_s,l_i+1, irrelevant)$ & $\beta$              & (0,0)        & fork $\not=$ active                    \\ \cline{2-5} 
                                                                                                   & $(1,1,l_i-l_h,relevant)$      & $(1-\alpha-\beta)/2$ & $(0,l_h)$    & $l_s^* \le l_h$                        \\ \cline{2-5} 
                                                                                                   & $(l_h+1,l_h+1,l_i,relevant)$  & $(1-\alpha-\beta)/2$ & (0,0)        & $l_s^* \le l_h$                        \\ \cline{2-5} 
                                                                                                   & $(l_s,l_s,l_i,relevant)$      & $1-\alpha-\beta$     & (0,0)        & $l_s = l_h + 2$                        \\ \cline{2-5} 
                                                                                                   & $(l_h+1,-1,l_i,relevant)$     & $1-\alpha-\beta$     & (0,0)        & $l_s = l_h + 1$                        \\ \cline{2-5} 
                                                                                                   & $(l_h+1,l_s,l_i,relevant)$    & $1-\alpha-\beta$     & (0,0)        & $l_s > l_h + 2$                        \\ \hline
\end{tabular}
\begin{tablenotes}
\footnotesize
\item
\footnotemark[2] The action \textit{match} is feasible only when $fork \not= irrelevant$ and $l_i \ge l_h$.
\end{tablenotes}
\end{table}

\end{document}